\theoremstyle{plain}
\newtheorem{theorem}{Theorem}
\newtheorem{lemma}[theorem]{Lemma}
\theoremstyle{definition}
\newtheorem{definition}[theorem]{Definition}
\theoremstyle{remark}
\newtheorem{remark}[theorem]{Remark}
\numberwithin{equation}{section}
\numberwithin{theorem}{section}
\newcommand{\X}{\vec X}
\newcommand{\x}{\mathbf x}
\newcommand{\Xs}{\mathbf X}
\renewcommand{\P}{\mathbb P}
\newcommand{\E}{\mathbb E}
\newcommand{\argmax}{\arg\max}
\newcommand{\pen}{\text{pen}}
\newcommand{\p}{\text{p}}
\newcommand{\wQ}{\widetilde Q}
\newcommand{\wP}{\widehat\P}
\newcommand{\M}{\mathcal M}
\newcommand{\rs}{r\!:\!s}
\newcommand{\1}{\mathbf 1}
\begin{document}

\title{A model selection approach for multiple sequence segmentation and dimensionality 
reduction}

\author{Bruno M. de Castro and Florencia Leonardi}

\date{\today}

\maketitle

\begin{abstract}
In this paper we consider the problem of segmenting $n$ aligned random sequences of equal length $m$ 
 into a finite number of independent 
blocks. We propose to use a penalized maximum likelihood criterion to infer 
simultaneously the number of points of independence as well as the position of each one 
of these points. We show how to compute the estimator  efficiently by means of a 
dynamic programming algorithm with  time complexity $O(m^2n)$. We also propose 
another algorithm, called \emph{hierarchical algorithm},  that provides an approximation to the  estimator when the sample size increases and runs in time  $O(mn)$. 
Our main theoretical result is the proof of almost sure consistency of the estimator and 
the convergence of the hierarchical algorithm when the sample size $n$ grows to infinity. 
We illustrate the convergence of these algorithms through some simulation examples and 
we apply the method to real protein sequence alignments of Ebola Virus.
\end{abstract}

\section{Introduction}

The problem of multiple sequence segmentation and dimensionality reduction appears in 
many application areas, including the analysis of multiple alignments of DNA/RNA and 
Amino Acid (AA) sequences. These datasets are characterized by the alignment of 
sequences belonging to the same group, as genes, protein families, etc. In general the 
length of these alignments is big and direct multivariate analysis is not possible. On the 
other hand, the identification of relevant patterns in these datasets is crucial for many 
biological problems, as for example the identification of specific patterns associated to 
some phenotypic variables.   An incomplete list of methods for sequence segmentation 
and pattern identification of biological sequences include \cite{bejeranoetal2001,gwadera2006,meme2009}.

In this paper we consider the problem of segmenting  $n$ aligned random sequences of equal length $m$  into a finite number of independent 
blocks. To be more precise, let $\X= (X_1,\dotsc,X_N)$ be a random vector with probability distribution $\P$ taking values in $A_1\times\dotsc\times A_N$, where $A_i$ is a finite alphabet for all $i=1,\dotsc, N$. 
We say $j=i+\frac12$, with $i\in\{1,2,\dotsc, N-1\}$,  is a point of independence for $\X$ if the random vectors  $(X_1,\dotsc,X_j)$ and $(X_{j+1},\dotsc,X_N)$ are independent.  We are interested in inferring the maximal set of points of independence of a random vector, when the number of such points is unknown. This is a model selection problem because the different parameter sets of the probability distributions belong to vector spaces with different dimensions. We propose a penalized maximum likelihood criterion to infer simultaneously the number of points of independence and  their positions. Our main theoretical result is the proof of the almost sure convergence of the estimator to the true set of points of  independence when the sample size $n$ increases.  

From the computational point of view we show how to compute the estimator by means of a dynamic programming algorithm that runs in time $O(m^2n)$. 
When the size of the sequences is large this can be time consuming. In these situations we propose a suboptimal algorithm running in time $O(mn)$  that also converges almost surely to the set of points of independence when the sample size $n$ increases. 

To our knowledge this is the first model selection  approach for multiple sequence segmentation based on the assumption of independence between segments. 
 The change point detection problem is a well studied phenomenon in the area of time series analysis (see for example  \cite{fryzlewicz2014} and references therein) but in general the setting and the goals are different from ours. 
 
In the next section we present the main definitions and state the main theoretical result. In Section~\ref{secalgs} we describe the algorithm to compute exactly the estimator and we present the \emph{hierarchical algorithm} that provides an efficient but suboptimal solution. 
 In Section~\ref{secmain} we prove the theoretical results and in Sections~\ref{sims} and \ref{data} we present the simulations and results on real data, respectively.   
 
\section{Likelihood function and model selection}\label{secdefs}

Given two integers $r\leq s$ denote by $\rs$ the integer interval $r,r+1,\dotsc,s$. 
Denote by $J_{r:s}$ the set $\{i + \frac{1}{2} \colon i=r,1,\dotsc,s-1\}$.
We say $U_{r:s}\subset J_{r:s}$ is a \emph{maximal} set of points of independence for the interval $\rs$ if no $v\in J_{r:s}\setminus U_{r:s}$ is a point of independence for $\X$.  For each random vector $\X$ and each interval $\rs$ there is only one  maximal set of points of independence; from now on this special set will be denoted by $U^*_{r:s}$. In the special case $r=1, s=N$ we will simply write 
$U$, $J$ and $U^*$.

Without loss of generality we will also suppose that the set $U_{r:s}\subset J_{r:s}$ is ordered; in this case $U_{r:s}=(u_1,\dotsc, u_k)$ with $u_i<u_j$ if $i<j$. 
From $U_{r:s}$ it is possible to obtain the set of blocks of independent variables as the set $B(U_{r:s}) = \{I_1,I_2,\dotsc,I_{k+1}\}$ of integer intervals given by
\begin{align*}
I_1 = r:(u_1-{\textstyle\frac12})\,,\qquad
I_{i} = (u_{i-1}+{\textstyle\frac12}):(u_{i}-{\textstyle\frac12})\,,\quad i=2,\dotsc,k
\end{align*}
and
\[
I_{k+1} = (u_k+{\textstyle\frac12}):s\,.
\]
Given an integer interval $I=\rs$ denote by $A^I$ the set of finite sequences $A_r\times\cdots\times A_s$ and by $|A^I| = \prod_{i=r}^s |A_i|$, with $|A_i|$ the cardinal of the set $A_i$.

In this way, given a vector of observations $\x = (x_1,\dotsc,x_N)$, the likelihood function for the set $U$ can be written as 
\[
L(U;\x) \;=\; \prod_{I\in B(U)} \P(X_j=x_j\colon j\in I)\,.
\]

Assume we observe an i.i.d sample $\x^{(1)},\dotsc,\x^{(n)}$ of $\X$, denoted by $\Xs$.    Then we have 
\[
L(U;\Xs) \;=\; \prod_{i=1}^n\prod_{I\in B(U)} \P(X_j=x^{(i)}_j\colon j\in I)\,.
\]
Denote by $X_{r:s}$ the sequence $X_r,\dotsc,X_s$ and by $\Xs_{r:s}$ the i.i.d sample  
$\x^{(1)}_{r:s},\dotsc,\x_{r:s}^{(n)}$.
Given a finite sequence $a_{r:s}\in A^{r:s}$ define
\[
N(a_{r:s})\;=\; \sum_{i=1}^n \1\{x^{(i)}_{r:s}=a_{r:s}\}\,.
\]
Then $L(U;\Xs)$ can be rewritten as
\begin{equation}\label{like}
L(U;\Xs) \;=\; \prod_{I\in B(U)}\prod_{a_I\in A^{I}} \P(X_{I}=a_{I})^{N(a_{I})}\,.
\end{equation}

To proceed we need a model for the block $I$ and some estimators for the probabilities 
$\P(X_{I}=a_{I})$. 
Assume that for each block $I=\rs\subset 1\!:\!N$ we have a model assignment $\M(I)$ 
and a   penalization function 
$\p(\Xs_I)$ satisfying 
\begin{itemize}
\item[\bf (A1)] For any $i\in r\!:\!(s-1)$ we have 
\[
\p(\Xs_{r:s}) \; > \; \p(\Xs_{r:i}) + \p(\Xs_{(i+1):s})\,.
\]
\end{itemize}

We will assume that in model $\M(I)$ we have maximum likelihood estimators for the probabilities $\P(a_{I})$; that is some values $\widehat \P(a_{I})$ maximizing \eqref{like}. Then we can plug-in these estimators in \eqref{like} obtaining  
\[
\hat L(U;\Xs) \;=\; \prod_{I\in B(U)}\prod_{a_I\in A^{I}} \widehat \P(a_{I})^{N(a_{I})}\,.
\]

\begin{remark}
As an example we could consider for any block $I$ the multivariate distribution over $A^I$ as possible model assignment $\M(I)$. For this class of models we have that the maximum likelihood estimators are given by
\[
\wP(a_I) \;=\; \frac{N(a_I)}{n}\,,\qquad a_I\in A^I\,.
\]
In this case, for example, we could take as penalizing function 
$\p(\Xs_I) = c(|A^I|-1)$, the number of parameters in the multivariate distribution over
 $A^I$ (in case  $A^I$ is known), multiplied by a positive constant $c$.  
 In the case $A^I$ is unknown a priori, it could be appropriate to use an empirical version of  $\p(I)$; namely  
 $\hat\p(\Xs_I) = c(\prod_{i\in I}\|\Xs_{i}\|_0-1)$, where $\|\Xs_{i}\|_0 = \sum_{a_i\in A_i} \1\{N(a_i)>0\}$
 and $c$ is a positive constant.
In some cases other model assignments or penalizing functions could be more efficient or  parsimonious, mainly in the case of large 
blocks $I$. One example is to consider a Markov chain of some appropriate order. 
\end{remark}

%

In the sequel it will be easier to work with the logarithm of the estimated likelihood function, for this reason we define
\[
\hat\ell(U;\Xs) \;=\; \sum_{I\in B(U)}Q(I,\Xs)\,,
\]
where 
\[
Q(I,\Xs) = \sum_{a_I\in A^{I}} N(a_{I}) \log \widehat \P(a_{I})\,.
\]

Now we introduce the model selection criterion based on the
maximization of the penalized log-likelihood.

\begin{definition}\label{defuhat}
Given a sample $\Xs$ define
\begin{equation}\label{uhat}
\hat U(\Xs) \;=\; \underset{U\subset J, U \text{ordered}}{\arg\max}\{ \;\hat\ell(U,\Xs) - \pen(U,n)\;\}\,,
\end{equation}
where
\[
\pen(U,n) = \sum_{I\in B(U)} \p(\Xs_I)\log(n)\,.
\]
\end{definition}

The first theoretical result of this article shows that when $n$ grows to infinity, the estimator \eqref{defuhat} is equal to the true set $U^*$ eventually almost surely. In other words, with probability one there exists $n_0$ (depending on the entire sample $\x^{(1)},\dotsc,\x^{(n)}$) such that for all $n\geq n_0$ we have $\hat U(\Xs)=U^*$.

\begin{theorem}\label{main}
The estimator \eqref{defuhat} is strongly consistent; that is 
\[
\hat U(\Xs) = U^*
\]
eventually almost surely when $n\to\infty$.
\end{theorem}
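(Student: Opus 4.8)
The plan is to split the event $\{\hat U(\Xs)\neq U^*\}$ into two error types and show each occurs only finitely often almost surely. Write $U^*$ for the true maximal set of points of independence. The key quantity is the comparison, for any ordered $U\subset J$, between the penalized log-likelihood at $U$ and at $U^*$, namely
\[
\Delta(U) \;=\; \bigl[\hat\ell(U^*,\Xs)-\pen(U^*,n)\bigr] - \bigl[\hat\ell(U,\Xs)-\pen(U,n)\bigr]\,,
\]
and it suffices to prove that $\Delta(U)>0$ eventually almost surely, simultaneously for all $U\neq U^*$ (there are only finitely many candidate sets, so a union bound over them is harmless). I would handle separately the case of \emph{underestimation}, where $U\not\supseteq U^*$ (some true point of independence is missed, possibly together with spurious ones), and \emph{overestimation}, where $U\supsetneq U^*$ (all true points are recovered but extra ones are added).

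First I would treat overestimation. If $U\supsetneq U^*$, then $B(U)$ refines $B(U^*)$: each block of $B(U^*)$ is partitioned into consecutive sub-blocks by the extra points. For a block $I$ that is split into $I_1,\dots,I_p$, the points in $I$ separating these sub-blocks are \emph{not} points of independence, but the empirical distribution still factorizes asymptotically only up to an $O_{\P}(\log\log n)$ fluctuation by the law of the iterated logarithm; more precisely, $Q(I,\Xs) - \sum_j Q(I_j,\Xs)$ equals (a generalized) empirical likelihood-ratio statistic for testing independence across that cut, which is $O_{\P}(\log\log n)$ (bounded a.s.\ by $C\log\log n$ eventually, via a LIL-type bound on the multinomial log-likelihood ratio — this is the step where one must be careful about whether the true cell probabilities are positive, but those with zero probability contribute nothing). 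On the other hand, by assumption \textbf{(A1)} the penalty is strictly super-additive, so splitting $I$ into $I_1,\dots,I_p$ \emph{decreases} $\pen$, and in fact $\pen(U,n)-\pen(U^*,n)\geq \delta\log n$ for some $\delta>0$ depending only on the finitely many candidate refinements. Hence $\Delta(U) \geq \delta\log n - C\log\log n \to \infty$, which is eventually positive almost surely.

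Next, underestimation. Suppose $v=i+\tfrac12\in U^*\setminus U$ is a true point of independence that $U$ fails to include. Then in $B(U)$ there is a block $I$ containing $v$ strictly in its interior, and across the cut at $v$ the random vectors are genuinely independent. Consider the refinement $U'=U\cup\{v\}$ (still ordered). Splitting $I$ at $v$ into $I',I''$, one has, by the factorization $\P(X_I=a_I)=\P(X_{I'}=a_{I'})\P(X_{I''}=a_{I''})$, that the population-level quantity $\E[Q(I,\Xs)]/n - \E[Q(I',\Xs)]/n - \E[Q(I'',\Xs)]/n$ converges to a strictly negative constant $-\kappa$ (it is minus a sum of mutual informations / entropies, or more generally minus the expected log-likelihood deficit of the smaller model; one must argue this is $<0$ whenever there is at least one \emph{other} cut forced to stay, but in fact since the blocks of $B(U)$ genuinely decompose into more than one independent block here, the larger model at $U'$ strictly reduces the Kullback--Leibler divergence from the truth — this is a standard identifiability-type argument). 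Wait — one has to be slightly more careful: the cleanest route is to compare $U$ with $U^*$ directly rather than with $U'$, and note that by the law of large numbers $\hat\ell(U^*,\Xs)/n \to \sum_{I\in B(U^*)}\E\log\P(X_I)$ while $\hat\ell(U,\Xs)/n \to \sum_{I\in B(U)}\E\log\P(X_I^{\text{(mis-specified)}})$, and the difference is a strictly positive constant $\kappa$ by a finite-dimensional KL/identifiability argument (Shannon's inequality, with equality iff the factorization holds, which it does not for $U$). The penalty difference is $O(\log n)$, hence negligible compared to the linear term $\kappa n$, so again $\Delta(U)=\kappa n + O(\log n)\to\infty$, eventually positive almost surely.

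Combining the two cases over the finitely many $U\neq U^*$ gives the result. The main obstacle, and the part deserving the most care, is the overestimation bound: one needs a uniform-in-$n$ almost-sure $O(\log\log n)$ control on the empirical log-likelihood-ratio statistics for independence across each superfluous cut, which requires a law of the iterated logarithm for (functions of) the multinomial counts, together with a careful treatment of cells whose true probability is zero (so that the plug-in estimators $\widehat\P(a_I)$ and the quantities $Q(I,\Xs)$ are well-defined and the log-terms do not blow up); the identifiability argument for underestimation, by contrast, is a soft application of the law of large numbers and Jensen/Shannon's inequality, and assumption \textbf{(A1)} is exactly what makes the penalty comparison in the overestimation case go through.
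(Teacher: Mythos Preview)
Your proposal contains the right two ingredients --- a linear Kullback--Leibler gap coming from the strong law of large numbers, and an $O(\log n)$ control on an empirical log-likelihood ratio coming from a LIL-type bound --- but you have applied them to the wrong cases. The analyses are essentially swapped, and this leaves a genuine gap.

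In your overestimation case $U\supsetneq U^*$, the extra cut points inside each block $I\in B(U^*)$ are, by maximality of $U^*$, \emph{not} points of independence. Hence $\bigl(Q(I,\Xs)-\sum_j Q(I_j,\Xs)\bigr)/n$ converges almost surely to a strictly positive mutual-information type quantity $D(\P\|\widetilde\P_j)>0$; it is not $O(\log\log n)$. Your sentence ``the empirical distribution still factorizes asymptotically'' is precisely what fails: the true law does not factorize at the spurious cuts. (Separately, the sign of your penalty comparison is reversed: by \textbf{(A1)} splitting \emph{decreases} the penalty, so for $U\supsetneq U^*$ one has $\pen(U,n)-\pen(U^*,n)\leq -\delta\log n$, not $\geq \delta\log n$.) The conclusion $\Delta(U)>0$ still holds here, but because $\Delta(U)=\Theta(n)-O(\log n)$, not for the reason you give.

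The real gap is in your underestimation case. When $U\subsetneq U^*$, the model class associated with $U$ (all laws that factorize over $B(U)$) \emph{contains} the true distribution $\P$, since $\P$ factorizes over the finer partition $B(U^*)$ and hence a fortiori over the coarser $B(U)$. Your KL/identifiability constant is therefore $\kappa=0$, and the SLLN argument gives nothing. In fact $\hat\ell(U,\Xs)\geq\hat\ell(U^*,\Xs)$ always in this case (the $U$-model is strictly larger), so the log-likelihood term works \emph{against} you. What saves the day is that this excess is only $O(\log n)$ (this is exactly Lemma~\ref{lem2}, proved via Lemma~\ref{lil}), while by \textbf{(A1)} the penalty at the coarser $U$ exceeds that at $U^*$ by at least $c\log n$ for some fixed $c>0$; choosing the $\alpha$ in the LIL bound smaller than $c$ yields $\Delta(U)>0$.

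In short, the LIL-type bound belongs to the case where the candidate model is \emph{larger} than the truth (fewer cuts, $U\subset U^*$), and the positive-KL/SLLN argument belongs to the case where the candidate model is misspecified (at least one spurious cut, $U\not\subset U^*$). The paper uses exactly this decomposition, $U\subset U^*$ versus $U\not\subset U^*$, and argues by a one-step modification (add one missing true cut in case (a), delete one spurious cut in case (b)) via Lemmas~\ref{lem1} and~\ref{lem2}, rather than by direct comparison to $U^*$.
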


The proof of Theorem~\ref{main} is postponed to Section~\ref{secmain}.

\section{Computation of the independence set estimator}\label{secalgs}

In this section we show how to compute efficiently the penalized maximum likelihood estimator given in Definition~\ref{defuhat}.  
The first subsection, mostly inspired in \cite{hawkins1976},  presents a dynamic programming algorithm that computes 
 exactly the optimal argument of \eqref{uhat},  at the cost of a quadratic computing time in terms of  the length of the sequences (see also \cite{bai-perron2003} and references therein).  In the second subsection we propose a \emph{divide and conquer} approximation for the optimum in \eqref{uhat}, at a more efficient computing time.  
 We show that this second algorithm also retrieves the true set of points of independence with probability one when the sample size grows.  

\subsection{Dynamic programming algorithm}


Let $F_{k+1}(N)$ denote the maximum value of the function in  (\ref{uhat}) corresponding to a $k$-dimensional vector $U$ for the sample $\Xs$; that is
\[
F_{k+1}(N) = \underset{U,\; |U|=k}{\max}\{\,\hat\ell(U,\Xs) - \text{pen}(U,n)\, \}\,.
\]

 It is easy to see that the optimal $k$-dimensional vector $U$
leading to $F_{k+1}(N)$ consist of $k-1$ independence points over 
$1\!:\!i$ and a single block $(i+1)\!:\!N$, where $i+1/2$ is the rightmost  point
of independence. 
Moreover, the $k$ blocks over $1\!:\!i$ must maximize the function (\ref{uhat}) for the sample $\Xs_{1:i}$, attaining $F_{k}(i)$. 
In this way, the dynamic programming recursion is
\[
F_1(N) = \widetilde Q(1\!:\!N,\Xs)\,,\qquad F_{k+1}(N) = \max_{i=k,\dotsc,N-1}\{\,F_{k}(i)+ \widetilde Q((i+1)\!:\!N,\Xs)\,\}\,,
\]
where $\widetilde Q$ is given by
\[
\widetilde Q(I,\Xs) = Q(I,\Xs) - \p(\Xs_I)\log(n)\,.
\]

The estimator $\hat U(\Xs)$ in Definition~\ref{defuhat} is computed by tabulating 
$F_1(i)$ for all $i$ up to $N$, and then by computing $F_2(i)$ for all $i$ 
and so on up to $F_{N}(N)$. The optimal value of $k$ is obtained by
the equation
\[
\hat k = \underset{k=1,\dots,N}{\argmax}\{\, F_k(N)\,\} - 1\,.
\]
and the vector $\hat U(\Xs)=(\hat u_1,\dotsc, \hat u_{\hat k})$ is given by
\begin{align*}
\hat u_{\hat k} &= \underset{i=\hat k,\dotsc,N-1}{\argmax}\{\, F_{\hat k}(i)+ \widetilde Q((i+1)\!:\!N,\Xs)\,\} + \frac12\,,\\
\hat u_{i}& = \underset{\ell=i,\dotsc,\hat u_{i+1}}{\argmax}\{\, F_i(\ell)+ \widetilde Q((\ell+1)\!:\!N,\Xs)\,\} + \frac12\,, \quad i = 1,\dotsc,\hat k-1\,.
\end{align*}

\subsection{Hierarchical Algorithm}

Here we present an efficient  \emph{divide and conquer} algorithm to compute the estimator given by Definition~\ref{defuhat} with computational cost $O(nm)$.

Let $I= \rs$ be an integer interval (with the convention that $I=\emptyset$ if $s<r$). Define
\begin{equation}\label{h}
h(I,\Xs)=\underset{i\in I}{\argmax} \{\, \wQ(r\!:\!(i-1),\Xs)+\wQ(i\!:\!s,\Xs)\,\} - r -   \frac12\,,
\end{equation}
where also by convention we have $\wQ(\emptyset,\Xs) = 0$. 
%

\begin{figure}[t!]
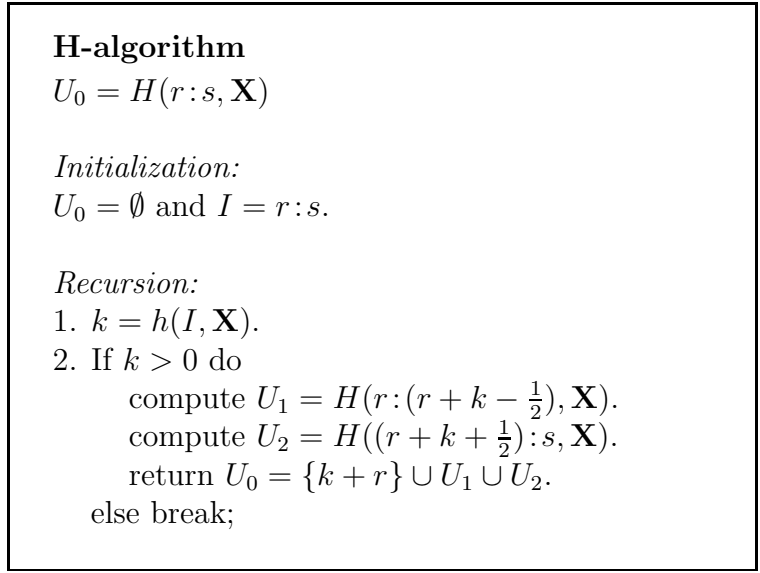

\begin{center}
\fbox{
\hspace*{2mm}
\begin{minipage}{9cm}
\vspace*{3mm}
\noindent{\bf H-algorithm\\[2pt]
$U_0 = H(\rs,\Xs)$}\\

\noindent{\em Initialization:} \\
$U_0=\emptyset$  and $I=\rs$. \\

\noindent{\em Recursion:} \\
1. $k = h(I,\Xs)$.\\
2. If $k>0$ do\\
\hspace*{10mm}compute $U_1 = H(r\!:\!(r+k-\frac12),\Xs)$.\\
\hspace*{10mm}compute $U_2 = H((r+k+\frac12)\!:\!s,\Xs)$.\\
\hspace*{10mm}return $U_0  = \{k+r\}\cup U_1\cup U_2$.\\
\hspace*{5mm}else break;\\
\end{minipage}
}
\end{center}
\caption{$H$-algorithm.}\label{halg}
\end{figure}

The idea of this algorithm is to compute the best point of  independence for the interval $I$ (if there is one point in the interval leading to a maximum of the penalized likelihood or a point outside $I$ otherwise) and then to iterate this criterion on both segments separated by this point, until no more points are detected. A summary of the Hierarchical algorithm is given in Figure~\ref{halg}. 

We show the almost sure convergence of the hierarchical algorithm in the following theorem. 

\begin{theorem}\label{thalg}
$H(1\!:\!N,\Xs) = U^*$ eventually almost surely as  
$n\to\infty$.
\end{theorem}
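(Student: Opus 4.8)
The plan is to reduce Theorem~\ref{thalg} to a single-split statement and then argue by induction on the length of the interval. I begin with two structural observations about $U^*_{r:s}$. Since being a point of independence is a property of $\X$ alone and $J_{r:s}=J_{r:(v-\frac12)}\cup\{v\}\cup J_{(v+\frac12):s}$ whenever $v\in J_{r:s}$, one has the decomposition
\[
U^*_{r:s}=U^*_{r:(v-\frac12)}\cup\{v\}\cup U^*_{(v+\frac12):s}\qquad\text{for every }v\in U^*_{r:s}.
\]
Moreover the $H$-algorithm only ever visits intervals $I=r\!:\!s$ that are blocks of a partition of $1\!:\!N$ into mutually independent blocks: this invariant is preserved because, when $X_{r:s}$ is independent of the remaining blocks, a short computation with the joint distribution shows that $X_{r:v}\perp X_{(v+1):s}$ forces $v$ to be a point of independence for $\X$, and refining one block of a mutually independent partition at such an internal point again yields a mutually independent partition. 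On such an interval, $X_{r:v}\perp X_{(v+1):s}$ holds precisely when $v\in U^*_{r:s}$.

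The heart of the proof is the following local lemma: \emph{for every interval $I=r\!:\!s$ that is a block of a mutually independent partition of $1\!:\!N$, eventually almost surely, if $U^*_{r:s}=\emptyset$ then $\wQ(r\!:\!s,\Xs)>\wQ(r\!:\!(i-1),\Xs)+\wQ(i\!:\!s,\Xs)$ for every $i\in(r+1)\!:\!s$, and if $U^*_{r:s}\neq\emptyset$ then the maximum over $i\in I$ of $i\mapsto\wQ(r\!:\!(i-1),\Xs)+\wQ(i\!:\!s,\Xs)$ is attained only at indices $i$ with $i-\frac12\in U^*_{r:s}$}. In terms of the algorithm this says that $h(I,\Xs)$ is eventually negative exactly when $X_{r:s}$ has no point of independence, so the recursion correctly stops on $I$, and otherwise $h(I,\Xs)=v-r$ for a genuine $v\in U^*_{r:s}$. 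To prove the lemma I would use, for the estimator $\widehat\P(a_I)=N(a_I)/n$, the identity (valid when $v=i-\frac12\in U^*_{r:s}$, so that $\P_{r:s}=\P_{r:(i-1)}\otimes\P_{i:s}$)
\[
Q(r\!:\!s,\Xs)-Q(r\!:\!(i-1),\Xs)-Q(i\!:\!s,\Xs)=n\,D(\widehat\P_{r:s}\|\P_{r:s})-n\,D(\widehat\P_{r:(i-1)}\|\P_{r:(i-1)})-n\,D(\widehat\P_{i:s}\|\P_{i:s}),
\]
where $D$ is Kullback--Leibler divergence and $\widehat\P,\P$ are empirical and true marginals; each term is $n\,D(\widehat\P\|\P)$ for a \emph{fixed} distribution, hence $O(\log\log n)=o(\log n)$ almost surely by the law of the iterated logarithm for i.i.d.\ samples on a finite alphabet. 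When instead $v$ is not a point of independence, the strong law of large numbers gives $\frac1n[Q(r\!:\!s,\Xs)-Q(r\!:\!(i-1),\Xs)-Q(i\!:\!s,\Xs)]\to D(\P_{r:s}\|\P_{r:(i-1)}\otimes\P_{i:s})>0$, so that quantity is $\Theta(n)$ a.s. Plugging these into $\wQ=Q-\p\log n$ and using Assumption~(A1), under which $\p(\Xs_{r:s})-\p(\Xs_{r:(i-1)})-\p(\Xs_{i:s})$ exceeds a positive constant, one finds that splitting at a true point of $U^*_{r:s}$ beats ``no split'' by $\sim c\log n$ against a loss $O(\log\log n)$ and beats splitting at any false point by $\Theta(n)$, while if $U^*_{r:s}=\emptyset$ every split loses to ``no split'' by $\Theta(n)$. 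Since $I$ has finitely many candidate split points, these comparisons hold simultaneously for all large $n$, which is the lemma. (With the data-dependent penalisation $\hat\p$ one first notes that $\|\Xs_i\|_0$ converges almost surely to the number of letters of positive probability, so the penalty stabilises and the strict inequality persists; this is the same remark needed in the proof of Theorem~\ref{main}.)

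The proof is completed by induction on $s-r$. There are only finitely many intervals $r\!:\!s$ with $1\le r\le s\le N$, so intersecting the finitely many probability-one events supplied by the local lemma gives a single almost sure event on which, for all $n$ past a common threshold, every invocation of $h$ behaves as in the lemma. On that event, for $n$ large, the $H$-algorithm run on any admissible block $r\!:\!s$ either stops and returns $\emptyset=U^*_{r:s}$ (when $U^*_{r:s}=\emptyset$), or splits at a genuine $v\in U^*_{r:s}$ and then, by the induction hypothesis applied to the strictly shorter admissible blocks $r\!:\!(v-\frac12)$ and $(v+\frac12)\!:\!s$ together with the decomposition identity above, returns $U^*_{r:(v-\frac12)}\cup\{v\}\cup U^*_{(v+\frac12):s}=U^*_{r:s}$. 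Taking $r\!:\!s=1\!:\!N$ gives $H(1\!:\!N,\Xs)=U^*$ eventually almost surely. The main obstacle is the almost sure second-order rate $n\,D(\widehat\P\|\P)=o(\log n)$ used above: this is the one genuinely probabilistic ingredient --- precisely the law-of-the-iterated-logarithm estimate that also underlies the proof of Theorem~\ref{main}, from which it may simply be borrowed --- whereas the structural facts, the reduction to a single split, and the induction are routine bookkeeping.
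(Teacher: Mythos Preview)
Your argument is correct and uses the same probabilistic ingredients as the paper (the law-of-the-iterated-logarithm bound of Lemma~\ref{lil}, and the $\Theta(n)$ separation of Lemma~\ref{lem1}), but the organisation is genuinely different and in fact tighter than the paper's own proof. The paper argues case by case: if $U^*=\emptyset$ it invokes Lemma~\ref{lem1} to force $h(1\!:\!N,\Xs)<0$, and if some $u\in U^*$ exists it uses Lemma~\ref{lem2} to show that splitting at $u$ beats ``no split'' on any subinterval containing $u$, concluding somewhat informally that as the interval containing $u$ shrinks along the recursion, $u$ is eventually selected. Your local lemma is stronger: by simultaneously comparing \emph{all three} options (true split, false split, no split) you obtain the chain ``true split $>$ no split $>$ false split'' on every admissible block, so the argmax of $h$ can only land on a genuine point of $U^*$. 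This immediately gives both inclusions $H(1\!:\!N,\Xs)\supseteq U^*$ and $H(1\!:\!N,\Xs)\subseteq U^*$ via the clean induction on interval length, whereas the paper's sketch leaves the second inclusion (no spurious split points) implicit. Your explicit invariant that the algorithm only visits blocks of a mutually independent partition, and the observation that on such blocks marginal independence $X_{r:v}\perp X_{(v+1):s}$ is equivalent to $v+\tfrac12\in U^*$, are exactly the structural facts needed to make Lemma~\ref{lem1} applicable at each stage; the paper uses them tacitly. The decomposition identity you write for the true-split case is a nice alternative to the paper's Lemma~\ref{lem2} (which bounds the same quantity via the MLE inequality), and your remark about the data-dependent penalty stabilising is a point the paper also needs but does not spell out.
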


The proof of this theorem is given in the following section.

\section{Proofs}\label{secmain}

Let $I\subset 1\!:\!N$  be an integer interval. 
Given two probability distributions $P_1$ and $P_2$ over $A^I$ let  $D(P_1\|P_2)$ denote the \emph{Kullback-Leibler} divergence between $P_1$ and $P_2$;  that is
\begin{align*}
D(P_1\|P_2)\;&=\; \sum_{a_I:P_1(a_I)>0} P_1(a_I) 
\log\Bigl(\,\frac{P_1(a_I)}{P_2(a_I)}\, \Bigr)\,.
\end{align*}
A well-know property about the Kullback-Leibler divergence between two probability distributions states that $D=0$ if and only if $P_1=P_2$. 


\begin{lemma}\label{lil}
For any block $I\subset 1\!:\!N$ and any $\alpha>0$ we have that  
\[
D(\wP\,\|\,\P)\;< \;\frac{\alpha\log(n)}{n}
\]
eventually almost surely as $n\to\infty$, where $\wP$ and $\P$ represent the marginal distributions over 
the interval $I$. 
\end{lemma}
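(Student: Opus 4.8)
The plan is to reduce the statement to a law of the iterated logarithm (LIL) for the empirical distribution $\wP$ over the finite alphabet $A^I$. Since $I$ is fixed and $A^I$ is finite, write $d = |A^I|$ and enumerate $A^I = \{a^{(1)},\dotsc,a^{(d)}\}$. The empirical frequencies $\wP(a^{(j)}) = N(a^{(j)})/n$ are averages of i.i.d.\ Bernoulli indicators, so for each $j$ the classical LIL gives
\[
\limsup_{n\to\infty}\;\frac{\sqrt{n}\,|\wP(a^{(j)})-\P(a^{(j)})|}{\sqrt{2\log\log n}} \;=\; \sqrt{\P(a^{(j)})(1-\P(a^{(j)}))}\qquad\text{a.s.}
\]
In particular, almost surely $|\wP(a^{(j)})-\P(a^{(j)})| = O(\sqrt{\log\log n/n})$ for every $j$, and hence (taking a finite union over $j=1,\dotsc,d$) the total variation distance satisfies $\|\wP-\P\|_\infty \le C\sqrt{\log\log n / n}$ eventually almost surely, for a suitable constant $C$ depending on $\P$ and $I$.

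Next I would convert this $\ell^\infty$ control into a bound on the Kullback--Leibler divergence by Pinsker-type reasoning in the reverse direction. On the set where $\P(a_I)>0$, a second-order Taylor expansion of $x\log(x/\P(a_I))$ around $x=\P(a_I)$ gives $\wP(a_I)\log(\wP(a_I)/\P(a_I)) \le \frac{(\wP(a_I)-\P(a_I))^2}{\P(a_I)} + o\big((\wP(a_I)-\P(a_I))^2\big)$ uniformly over the (finitely many) symbols once $\wP$ is close enough to $\P$, which happens eventually a.s.\ by the previous paragraph. One must only be slightly careful with symbols $a_I$ for which $\P(a_I)=0$: these do not appear in the sum defining $D(\wP\|\P)$ at all, so they cause no difficulty. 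Summing the $d$ terms yields
\[
D(\wP\,\|\,\P) \;\le\; C'\,\|\wP-\P\|_\infty^2 \;\le\; C'C^2\,\frac{\log\log n}{n}
\]
eventually almost surely, for a constant $C'$ depending on $\min\{\P(a_I):\P(a_I)>0\}$.

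Finally, since $\log\log n = o(\log n)$, for any fixed $\alpha>0$ we have $C'C^2\log\log n < \alpha\log n$ for all $n$ large enough, and therefore $D(\wP\|\P) < \alpha\log(n)/n$ eventually almost surely, which is the claim. The only mild subtlety — and the place I would be most careful — is making sure the Taylor estimate holds \emph{uniformly} over the alphabet and only kicks in once $\wP$ is provably close to $\P$; this is exactly guaranteed by the almost-sure convergence $\wP\to\P$, which itself follows from the LIL bound (or, more cheaply, from the strong law of large numbers applied to each of the $d$ indicator averages). No genuinely hard estimate is needed; the argument is a routine combination of the finite-alphabet LIL and a quadratic bound on KL divergence, with the extra $\log\log n$ slack absorbed comfortably into the $\log n$ on the right-hand side.
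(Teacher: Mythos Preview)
Your proof is correct and follows essentially the same route as the paper: apply the law of the iterated logarithm coordinatewise to the empirical frequencies over the finite alphabet $A^I$, bound $D(\wP\|\P)$ by the chi-square quantity $\sum_{a_I}(\wP(a_I)-\P(a_I))^2/\P(a_I)$ (the paper invokes Lemma~6.3 of Csisz\'ar--Talata for this step, where you sketch a Taylor argument), and then absorb the resulting $O(\log\log n/n)$ rate into $\alpha\log(n)/n$. One small slip worth fixing: your \emph{per-term} Taylor inequality is not literally true because the first-order term $\wP(a_I)-\P(a_I)$ does not vanish termwise, but it cancels after summing over $a_I$ (or you can bypass Taylor entirely via $\log x\le x-1$, which gives $D(\wP\|\P)\le\chi^2(\wP,\P)$ exactly), so your summed bound $D(\wP\|\P)\le C'\|\wP-\P\|_\infty^2$ is correct.
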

\begin{proof}
Define, for a fixed $a_I\in A^I$ with $\P(a_I)>0$, the random variables 
\[
Y_i(a_I) = \mathbf{1}\{ \x_I^{(i)}=a_I\} - \P(a_I)\,, \; i=1,2,\ldots,n
\]
and
\[
Z_n(a_I)=\sum_{i=1}^n Y_i(a_I) = N(a_I) - n\P(a_I).
\]
The variables $\{Y_i(a_I) : i=1,2,\ldots,n\}$ are independent and identically distributed, with $\E(Y_i(a_I))=0$ and $\E(Y_i(a_I)^2)=\P(a_I)(1-\P(a_I))$. Then, by the Law of the Iterated Logarithm (cf. Theorem~3.52 in \cite{breiman1992}) we have that for any $\epsilon > 0$
\begin{equation}
|Z_n(a_I)| \,<\, (1+\epsilon)\,\P(a_I)(1-\P(a_I))\,\sqrt{2n\log \log n} 
 \end{equation}
eventually almost surely as $n\to \infty$. 
Dividing both sides of the 
last inequality by $n\sqrt{\P(a_I)}$ we obtain that for any $\delta>0$
\begin{equation}\label{ineqp}
\frac{|Z_n(a_I)/n|}{\sqrt{\P(a_I)}}\,=\,\frac{|\wP(a_I)-\P(a_I)|}{\sqrt{|\P(a_I)}} \,<\, \sqrt{\frac{\delta\log(n)}{n}}
\end{equation}
eventually almost surely as $n\to \infty$. 
Now by Lemma~6.3 in \cite{csiszar2006} and \eqref{ineqp}
 we have that 
\begin{align*}
D(\wP\|\P) \;&\leq \,\sum_{a_I: N(a_I)>0}\; \frac{ |\wP(a_I)-\P(a_I)|^2}{\P(a_I)}\\
&\leq \,|A^I| \; \max_{a_I:\P(a_I)>0}\; \frac{ |\wP(a_I)-\P(a_I)|^2}{\P(a_I)}\\
&\leq\;\frac{\delta\log(n)}{|A^I|n}\,
\end{align*}
eventually almost surely as $n\to \infty$ and this implies the result for $|A^I|$
finite. 
\end{proof}

For any $j\in I=\rs$  denote by $\widetilde\P_{j}$ the probability distribution  given by 
\[
\widetilde\P_{j}(a_I) \;=\; \P(a_{r:j}) \P(a_{(j+1):s}),\qquad a_I\in A^{I}.
\]

\begin{lemma}\label{lem1}
Let $I=\rs\subset 1\!:\!N$ and suppose $I\subset I^*$, with $I^*\in B(U^*)$; that is, there is no point of independence in the rage of $I$. Then there exists a constant $\delta>0$
such that for any $i\in I, i\neq s$ we have 
\[
Q(I;\Xs) \,-\, Q(r\!:\!i;\Xs)  \,-\, Q((i+1)\!:\!s;\Xs) \;>\;  \delta \,n
\]
eventually almost surely when $n\to \infty$. 
\end{lemma}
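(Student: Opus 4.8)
The plan is to show that, once divided by $n$, the difference $Q(I;\Xs)-Q(r\!:\!i;\Xs)-Q((i+1)\!:\!s;\Xs)$ is bounded below by an expression that converges almost surely to the Kullback--Leibler divergence $D(\P\,\|\,\widetilde\P_i)$, where $\P$ is the marginal law of $\X$ on $I$ and $\widetilde\P_i$ is the product measure defined just before the statement. Since the hypothesis says that no $i+\frac12$, $i\in r\!:\!(s-1)$, is a point of independence, we have $\P\neq\widetilde\P_i$, hence $D(\P\,\|\,\widetilde\P_i)>0$ by the characterisation of the divergence recalled above; and because $r\!:\!(s-1)$ is finite it then suffices to take $\delta=\frac12\min_i D(\P\,\|\,\widetilde\P_i)>0$ and to intersect the finitely many almost sure events, one for each $i$.

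For the lower bound I would argue from the definitions of $Q$ and $\widehat\P$. First, assuming as in the multinomial example of the Remark that the true marginal $\P$ lies in the model $\M(I)$, the maximised log-likelihood $Q(I;\Xs)$ is at least its value at $\P$, i.e.\ $Q(I;\Xs)\ge\sum_{a_I}N(a_I)\log\P(a_I)$. Second, the empirical frequencies $N(\cdot)/n$ maximise the likelihood over \emph{all} probability distributions on the relevant alphabet, a fortiori over $\M(r\!:\!i)$ and $\M((i+1)\!:\!s)$, so $Q(r\!:\!i;\Xs)\le\sum_{a_{r:i}}N(a_{r:i})\log\frac{N(a_{r:i})}{n}$ and similarly for the block $(i+1)\!:\!s$. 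Putting these together,
\begin{align*}
\frac1n\bigl(Q(I;\Xs)-Q(r\!:\!i;\Xs)-Q((i+1)\!:\!s;\Xs)\bigr)\;\ge\;&\frac1n\sum_{a_I}N(a_I)\log\P(a_I)\\
&-\frac1n\sum_{a_{r:i}}N(a_{r:i})\log\frac{N(a_{r:i})}{n}-\frac1n\sum_{a_{(i+1):s}}N(a_{(i+1):s})\log\frac{N(a_{(i+1):s})}{n}\,.
\end{align*}
For the multinomial model this is in fact an identity and the right-hand side is the empirical mutual information of the two blocks.

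Next I would pass to the limit $n\to\infty$. By the strong law of large numbers each ratio $N(a)/n$ converges almost surely to the corresponding true probability; symbols of zero probability are never observed and so drop out eventually; and the continuity of $t\mapsto t\log t$ on $[0,1]$ lets one carry these limits through the finite sums. The right-hand side of the display therefore converges almost surely to $\sum_{a_I}\P(a_I)\log\P(a_I)-\sum_{a_{r:i}}\P(a_{r:i})\log\P(a_{r:i})-\sum_{a_{(i+1):s}}\P(a_{(i+1):s})\log\P(a_{(i+1):s})$, which is exactly $D(\P\,\|\,\widetilde\P_i)$. Hence the normalised difference exceeds $\delta$ for every $i\in r\!:\!(s-1)$ eventually almost surely, which is the assertion.

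The step I expect to require the most care is the one-sided bound for a general model class $\M$: one cannot, in general, rewrite $Q(I;\Xs)-Q(r\!:\!i;\Xs)-Q((i+1)\!:\!s;\Xs)$ as a divergence between empirical distributions, so it has to be estimated asymmetrically — $Q(I;\Xs)$ from below by the log-likelihood at the true marginal $\P$ (valid because $\P\in\M(I)$) and the two split terms from above by empirical entropies (valid because the empirical distribution is the unconstrained maximum-likelihood estimator) — and one must then verify that the limiting lower bound is precisely $D(\P\,\|\,\widetilde\P_i)$ and not something smaller. The remaining points are routine: zero-probability symbols are never observed and so disappear from all limits, and the strict positivity $D(\P\,\|\,\widetilde\P_i)>0$ follows because the hypothesis gives $\P\neq\widetilde\P_i$ — when $I$ is one of the blocks of $B(U^*)$, a factorisation of $\P$ at an interior index $i$ would, through the mutual independence of the blocks, make $i+\frac12$ a point of independence of $\X$, contradicting the maximality of $U^*$.
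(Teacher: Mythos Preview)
Your proposal is correct and follows the same overall strategy as the paper: divide the $Q$-difference by $n$, use the strong law of large numbers to obtain convergence to $D(\P\,\|\,\widetilde\P_i)>0$, and set $\delta=\tfrac12\min_{i}D(\P\,\|\,\widetilde\P_i)$.

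The one noteworthy difference is how the $Q$-difference is related to the divergence. The paper simply writes the identity
\[
Q(I;\Xs)-Q(r\!:\!i;\Xs)-Q((i+1)\!:\!s;\Xs)=\sum_{a_r^s}N(a_r^s)\log\frac{\wP(a_r^s)}{\wP(a_r^i)\wP(a_{i+1}^s)}
\]
and passes to the limit; this identity is valid only under the multinomial model $\wP(a_I)=N(a_I)/n$, which the paper is tacitly assuming here. Your asymmetric bounds---$Q(I;\Xs)$ from below by the log-likelihood at the true $\P\in\M(I)$, and each split $Q$ from above by the unconstrained empirical log-likelihood---yield the same limit $D(\P\,\|\,\widetilde\P_i)$ but remain valid for an arbitrary model class containing the truth. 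So your argument is a mild generalisation of the paper's, at the cost of a slightly less direct computation; for the multinomial case it reduces to exactly the paper's identity, as you noted.
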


\begin{proof}
Note that for any $a_r^i\in A^{r:i}$ we have $N(a_r^i) = \sum_{a_{i+1}^s}N(a_r^i a_{i+1}^s)$
and analogously for any $a_{i+1}^s\in A^{(j+1):s}$, $N(a_{i+1}^s) = \sum_{a_r^i}N(a_r^i a_{i+1}^s)$. Then 
\[
Q(I;\Xs) \,-\, Q(r\!:\!i;\Xs)  \,-\, Q((i+1)\!:\!s;\Xs) =  \sum_{a_r^s\in A^{I}} N(a_r^s) 
\log\Biggl(\frac{\wP(a_r^s)}{\wP(a_r^i)\wP(a_{i+1}^s)} \Biggr)
\]
Dividing by $n$ and taking limit when $n\to\infty$ we have that  the expression above converges almost surely to
\[
 \sum_{a_r^s\in A^{I}} \P(a_r^s) 
\log\Biggl(\frac{\P(a_r^s)}{\P(a_r^i)\P(a_{i+1}^s)} \Biggr)\;=\; D(\P\|\widetilde\P_{j})\;>\;0\,.
\]
The last inequality follows because $i+\frac12$ is not a point of independence for $\vec{X}$. To conclude the proof, it is enough to take 
\[
\delta = \min_{j\in I,j\neq s}{D(\P\|\widetilde\P_{j})}/2\,. \qedhere
\]
\end{proof}

\begin{lemma}\label{lem2}
Let $I=\rs\subset 1\!:\!N$ and suppose there exists $i\in I$ such that $i+\frac12 \in U^*$; that is, there is a point of independence in the rage of $I$. Then for any $\alpha>0$  we have 
\[
Q(I;\Xs) \,-\, Q(r\!:\!i;\Xs)  \,-\, Q((i+1)\!:\!s;\Xs) \;<\;  \alpha \log(n)
\]
eventually almost surely when $n\to \infty$. 
\end{lemma}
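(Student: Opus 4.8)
The plan is to split the difference $Q(I;\Xs) - Q(r\!:\!i;\Xs) - Q((i+1)\!:\!s;\Xs)$ into two pieces by inserting the distribution $\widetilde\P_i$ defined just before Lemma~\ref{lem1}: a "true" Kullback--Leibler term comparing $\wP$ to $\widetilde\P_i$, which vanishes because $i+\tfrac12$ behaves like a point of independence on the sub-block, and an "estimation" term comparing $\wP$ to the product of marginal empirical distributions, which is controlled by Lemma~\ref{lil}. Concretely, as in the proof of Lemma~\ref{lem1}, I would write
\[
Q(I;\Xs) - Q(r\!:\!i;\Xs) - Q((i+1)\!:\!s;\Xs) \;=\; \sum_{a_r^s\in A^I} N(a_r^s)\log\Biggl(\frac{\wP(a_r^s)}{\wP(a_r^i)\,\wP(a_{i+1}^s)}\Biggr)\,,
\]
and then insert $\P(a_r^s)$ (the true joint) in numerator and denominator of the logarithm, grouping the terms so that the right-hand side becomes $n\,D(\wP\|\widetilde\P_i) - n\,D(\wP\,\|\,\P\text{ restricted to }I)$ plus cross terms, or more directly so that it is bounded above by $n\bigl[D(\wP\|\widetilde\P_i) + (\text{lower-order})\bigr]$ via the log-sum / nonnegativity of $D$.

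The key observation is that since $i+\tfrac12$ is a genuine point of independence for $\X$ (because $i+\tfrac12\in U^*$), we have $\widetilde\P_i(a_I)=\P(a_r^i)\,\P(a_{i+1}^s)=\P(a_r^s)$ for the true marginals, so $D(\P\|\widetilde\P_i)=0$. The quantity $D(\wP\|\widetilde\P_i)$ is therefore an "empirical" divergence between the joint empirical law and the product of the two empirical marginals, and it should be bounded by a sum of three true-to-empirical KL divergences: $D(\wP\,\|\,\P)$ on $A^I$, on $A^{r:i}$, and on $A^{(i+1):s}$. Each of these three is, by Lemma~\ref{lil}, eventually almost surely smaller than $(\alpha/3)\log(n)/n$ for any prescribed $\alpha>0$. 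Multiplying through by $n$ gives the bound $\alpha\log(n)$ for all $n\geq n_0$, with $n_0$ depending on the whole sample.

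The main obstacle is the bookkeeping of the cross terms in step one: when one rewrites $\sum_{a_r^s} N(a_r^s)\log\frac{\wP(a_r^s)}{\wP(a_r^i)\wP(a_{i+1}^s)}$ in terms of divergences, the naive grouping produces $n D(\wP\|\widetilde\P_i)$ only after canceling terms of the form $\sum N(a_r^s)\log\P(a_r^s)$ against the corresponding marginal sums, and one must check these cancel exactly using $N(a_r^i)=\sum_{a_{i+1}^s}N(a_r^i a_{i+1}^s)$ and the analogous identity, exactly as in Lemma~\ref{lem1}. The cleanest route is probably to prove and use the elementary inequality
\[
D(\wP\|\widetilde\P_i)\;\le\; D(\wP\,\|\,\P|_I)\;+\;D(\wP\,\|\,\P|_{r:i})\;+\;D(\wP\,\|\,\P|_{(i+1):s})\,,
\]
obtained by adding and subtracting $\log\P(a_r^s)=\log\P(a_r^i)+\log\P(a_{i+1}^s)$ inside the logarithm defining $D(\wP\|\widetilde\P_i)$ and using that the joint empirical marginal of $\wP$ over $A^{r:i}$ equals $\wP(a_r^i)$ (and similarly on the right block). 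Once that inequality is in hand, the rest is an immediate application of Lemma~\ref{lil} three times.
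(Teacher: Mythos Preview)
Your approach is essentially the paper's, just executed less directly. If you actually carry out the insertion of $\log\P(a_r^s)=\log\P(a_r^i)+\log\P(a_{i+1}^s)$ that you describe, the identity comes out with \emph{minus} signs, not plus:
\[
Q(I;\Xs)-Q(r\!:\!i;\Xs)-Q((i+1)\!:\!s;\Xs)\;=\;n\,D(\wP\,\|\,\P)\;-\;n\,D(\wP_{r:i}\,\|\,\P_{r:i})\;-\;n\,D(\wP_{(i+1):s}\,\|\,\P_{(i+1):s})\,,
\]
so the left-hand side is already $\le n\,D(\wP\,\|\,\P)$ and a \emph{single} application of Lemma~\ref{lil} suffices; your three-term bound with $\alpha/3$ each is valid but unnecessary. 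The paper reaches the same inequality in one line by invoking the MLE property: since $\wP(a_r^i)$ and $\wP(a_{i+1}^s)$ maximize the corresponding marginal log-likelihoods, replacing them by the true $\P(a_r^i)$, $\P(a_{i+1}^s)$ can only decrease $Q(r\!:\!i;\Xs)+Q((i+1)\!:\!s;\Xs)$, and by independence $\P(a_r^i)\P(a_{i+1}^s)=\P(a_r^s)$, yielding $Q(I;\Xs)-Q(r\!:\!i;\Xs)-Q((i+1)\!:\!s;\Xs)\le n\,D(\wP\,\|\,\P)$ immediately. Also note a notational slip: in the paper $\widetilde\P_i$ is the product of the \emph{true} marginals, not the empirical ones, and under the hypothesis $i+\tfrac12\in U^*$ one simply has $\widetilde\P_i=\P$ on $I$, so $D(\wP\,\|\,\widetilde\P_i)=D(\wP\,\|\,\P)$ already.
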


\begin{proof}
As in the proof of Lemma~\ref{lem1} we can write 
\[
Q(I;\Xs) \,-\, Q(r\!:\!i;\Xs)  \,-\, Q((i+1)\!:\!s;\Xs) =  \sum_{a_r^s\in A^{I}} N(a_r^s) 
\log\biggl[\frac{\wP(a_r^s)}{\wP(a_r^i)\wP(a_{i+1}^s)} \biggr]\,.
\]
As $\hat\P(\cdot)$ is the maximum likelihood estimator of $\P(\cdot)$ and $i+\frac12$ is a point of independence  we have that
\begin{align*}
 Q(r\!:\!i;\Xs)  \,+\, Q((i+1)\!:\!s;\Xs)  \;&\geq\;  \sum_{a_r^s\in A^{I}} N(a_r^s) 
\log\bigl[\P(a_r^i)\P(a_{i+1}^s) \bigr]\\
&=\;  \sum_{a_r^s\in A^{I}} N(a_r^s) 
\log\bigl[\P(a_r^s) \bigr]
\end{align*}
Then by combining this last inequality and Lemma~\ref{lil} we have that 
\begin{align*}
Q(I;\Xs) \,-\, Q(r\!:\!i;\Xs)  \,-\, Q((i+1)\!:\!s;\Xs) \;&\leq\; \sum_{a_r^s\in A^{I}} N(a_r^s) 
\log\biggl[\frac{\wP(a_r^s)}{\P(a_r^s)} \biggr]\\
&< \alpha \log(n)
\end{align*}
eventually almost surely as $n\to\infty$. 
\end{proof}

\begin{proof}[Proof of Theorem~\ref{main}]
We will show that if $U\subset J$ (ordered) is such that $U\neq U^*$ then there exists a modification $U'$ of $U$ such that
\begin{equation}\label{ineq}
\hat\ell(U';\Xs) - \pen(U',n) \;> \; \hat\ell(U;\Xs) - \pen(U,n)\,
\end{equation}
eventually almost surely as $n\to\infty$. 
As the number of possible such modifications of $U$ leading to $U^*$ is finite this implies the result. 
Consider the following cases
\begin{itemize}
\item[(a)] $U\subset U^*$
\item[(b)] $U\not\subset U^*$. 
\end{itemize}
First we will prove \eqref{ineq} for case (a). Note that in this case it is enough to show it in the case $U^*\neq\emptyset$. Assume $U^*=(u^*_1,\dotsc, u^*_K)$, with $K\geq 1$,  $U=(u_1,\dotsc, u_k)$ and let $i$ be the first index such that $u_i^*\notin U$. Define $U'= U\cup\{u_i^*\} = (u_1,\dotsc,u_{i-1},u_i^*,u_{i+1},\dotsc,u_k)$ (see Figure~\ref{fig2}).

\begin{figure}[t]
\begin{center}
\psset{unit=1cm}
\begin{pspicture}(9.5,8)(-0.5,2)
\large
\psset{linewidth=2pt,arrowsize=12pt}
\psline(1,7)(9,7)\rput[l]{0}(0,7){$U^*$}
\psline(1.3,6.8)(1.3,7.2)\rput[b]{0}(1.3,7.5){$u_{i-1}^*$}
\psline(4,6.8)(4,7.2)\rput[b]{0}(4,7.5){$u_i^*$}
\psline(6.5,6.8)(6.5,7.2)\rput[b]{0}(6.5,7.5){$u_{i+1}^*$}
\psline(8.7,6.8)(8.7,7.2)\rput[b]{0}(8.7,7.5){$u_{r}^*$}
\rput[b]{0}(7.7,7.5){$\cdots$}
\rput{0}(0,-2){
\psline(1,7)(9,7)\rput[l]{0}(0,7){$U$}
\psline(1.3,6.8)(1.3,7.2)\rput[b]{0}(1.3,7.5){$u_{i-1}$}
\psline(8.7,6.8)(8.7,7.2)\rput[b]{0}(8.7,7.5){$u_{i}$}
}
\rput{0}(0,-4){
\psline(1,7)(9,7)\rput[l]{0}(0,7){$U'$}
\psline(1.3,6.8)(1.3,7.2)\rput[b]{0}(1.3,7.5){$u_{i-1}$}
\psline(4,6.8)(4,7.2)\rput[b]{0}(4,7.5){$u_i^*$}
\psline(8.7,6.8)(8.7,7.2)\rput[b]{0}(8.7,7.5){$u_i$}
}
\end{pspicture}
\caption{Case (a): $U\subset U^*$.}\label{fig2}
\end{center}
\end{figure}
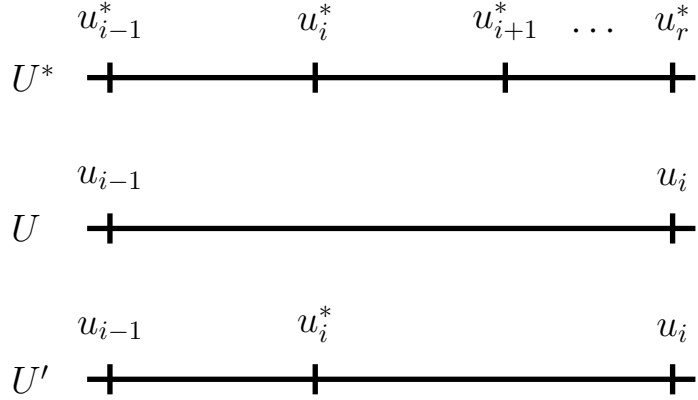
Note that $B(U')$ has the two blocks $I_1 = (u_{i-1}+\frac12)\!:\!(u_{i}^*-\frac12)$ and  
$I_2 = (u_{i}^*+\frac12)\!:\!(u_{i}-\frac12)$ replacing the single block $I = (u_{i-1}+\frac12)\!:\!(u_{i}-
\frac12)$ in $B(U)$. 
Then by Lemma~\ref{lem2} we have that for any $\alpha > 0$
\begin{align*}
\hat\ell(U';\Xs) -  \hat\ell(U;\Xs) \;&= \;Q(I_1,\Xs) + Q(I_2,\Xs) - Q(I,\Xs)\\
& >\; -\,\alpha\log(n)
\end{align*}
eventually almost surely as $n\to\infty$. 
On the other hand 
\begin{align*}
\pen(U',n) - \pen(U,n) \;&=\;  c_1 \log(n)\,,
\end{align*}
where $c_1 =   \p(\Xs_{I_1}) + \p(\Xs_{I_2})  - \p(\Xs_I)  < 0$. 
Therefore eventually almost surely as $n\to\infty$ we have that
\[
\hat\ell(U';\Xs) -  \hat\ell(U;\Xs)\;> \; \pen(U',n) - \pen(U,n)\,.
\] 

Finally we will consider case (b). As before suppose 
 $U^*=(u^*_1,\dotsc, u^*_K)$, $U=(u_1,\dotsc, u_k)$ and let $i$ be the first index such that $u_i\notin U^*$. Define $U'= U\setminus\{u_i\} = (u_1,\dotsc,u_{i-1},u_{i+1},\dotsc,u_k)$ (see Figure~\ref{fig1}).
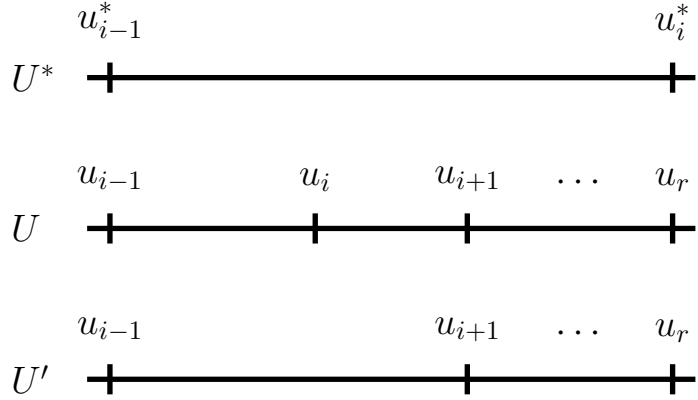
\begin{figure}[t]
\begin{center}
\psset{unit=1cm}
\begin{pspicture}(9.5,8)(-0.5,2)
\large
\psset{linewidth=2pt,arrowsize=12pt}
\psline(1,7)(9,7)\rput[l]{0}(0,7){$U^*$}
\psline(1.3,6.8)(1.3,7.2)\rput[b]{0}(1.3,7.5){$u_{i-1}^*$}
\psline(8.7,6.8)(8.7,7.2)\rput[b]{0}(8.7,7.5){$u_{i}^*$}
\rput{0}(0,-2){
\psline(1,7)(9,7)\rput[l]{0}(0,7){$U$}
\psline(1.3,6.8)(1.3,7.2)\rput[b]{0}(1.3,7.5){$u_{i-1}$}
\psline(4,6.8)(4,7.2)\rput[b]{0}(4,7.5){$u_i$}
\psline(6,6.8)(6,7.2)\rput[b]{0}(6,7.5){$u_{i+1}$}
\rput[b]{0}(7.5,7.5){$\cdots$}
\psline(8.7,6.8)(8.7,7.2)\rput[b]{0}(8.7,7.5){$u_r$}
}
\rput{0}(0,-4){
\psline(1,7)(9,7)\rput[l]{0}(0,7){$U'$}
\psline(1.3,6.8)(1.3,7.2)\rput[b]{0}(1.3,7.5){$u_{i-1}$}
\psline(6,6.8)(6,7.2)\rput[b]{0}(6,7.5){$u_{i+1}$}
\rput[b]{0}(7.5,7.5){$\cdots$}
\psline(8.7,6.8)(8.7,7.2)\rput[b]{0}(8.7,7.5){$u_r$}
}
\end{pspicture}
\caption{Case (b): $U\not\subset U^*$.}\label{fig1}
\end{center}
\end{figure}
In this case we have that $B(U')$ has the single block
$I = (u_{i-1}+\frac12)\!:\!(u_{i+1}-\frac12)$
replacing the two blocks $I_1 = (u_{i-1}+\frac12)\!:\!(u_{i}-\frac12)$ and  
$I_2 = (u_{i}+\frac12)\!:\!(u_{i+1}-\frac12)$ in  $B(U)$. Then 
\[
\hat\ell(U';\Xs) -  \hat\ell(U;\Xs) = Q(I,\Xs) - Q(I_1,\Xs) - Q(I_2,\Xs)\,. 
\]
By Lemma~\ref{lem1}, as $u_i\notin U^*$ we have that there exists a constant $\delta>0$ such that eventually almost surely 
\[
\hat\ell(U';\Xs) -  \hat\ell(U;\Xs) \; > \delta\, n\,.
\]
On the other hand we have that 
\begin{align*}
\pen(U',n) - \pen(U,n) \;&=\;  c_2 \log(n)\,,
\end{align*}
where $c_2 =  \p(\Xs_I) - \p(\Xs_{I_1}) - \p(\Xs_{I_2}) > 0$. 
Therefore eventually almost surely as $n\to\infty$ we have that 
\[
\hat\ell(U';\Xs) -  \hat\ell(U;\Xs)\;> \; \pen(U',n) - \pen(U,n)\,.
\] 
This concludes the proof of Theorem~\ref{main}.
\end{proof}

\begin{proof}[Proof of Theorem~\ref{thalg}]
First consider the case $U^*=\emptyset$. By Lemma~\ref{lem1} we have that eventually almost surely as $n\to\infty$ the value of $i$ maximizing \eqref{h} will be $i=r$, giving $h(1\!:\!N,\Xs) = -\frac12$. Therefore $H(1\!:\!N,\Xs) = \emptyset = U^*$ eventually almost surely as $n\to\infty$. 
Now suppose there is a point of independence $u\in U^*$.  We will prove that eventually almost surely $u\in H(1\!:\!N,\Xs)$. As $u\in U^*$, for any integer $r<u$ and  any integer $s>u$ we have by Lemma~\ref{lem2} that for any $\alpha>0$
\begin{equation*}
Q(r\!:\!(u-\frac12),\Xs) + Q((u+\frac12)\!:\!s),\Xs) - Q(r\!:\!s),\Xs) 
\,>  \,-\alpha\, \log(n)
\end{equation*}
eventually almost surely as $n\to\infty$. 
Then if we take  $\alpha<\p(\Xs_{r:s}) - \p(\Xs_{r:(u-\frac12)}) - \p(\Xs_{u+\frac12:s})$ we will have that eventually,
as the interval length $s-r$ decreases, $h(\rs,\Xs) = u - r$, or  equivalently $u\in H(\rs,\Xs)\subset H(1\!:\!N,\Xs)$. 
As in any iteration of the algorithm there is an interval that contains $u$ and the length decreases, this will happen eventually almost surely as $n\to\infty$. 
\end{proof}


\section{Simulations}\label{sims}

\begin{figure}[t]
\begin{center}
\begin{pspicture}(6,6)(0,0)
\rput{0}(3,3){\includegraphics[scale=0.5,trim = 20 20 10 0]{./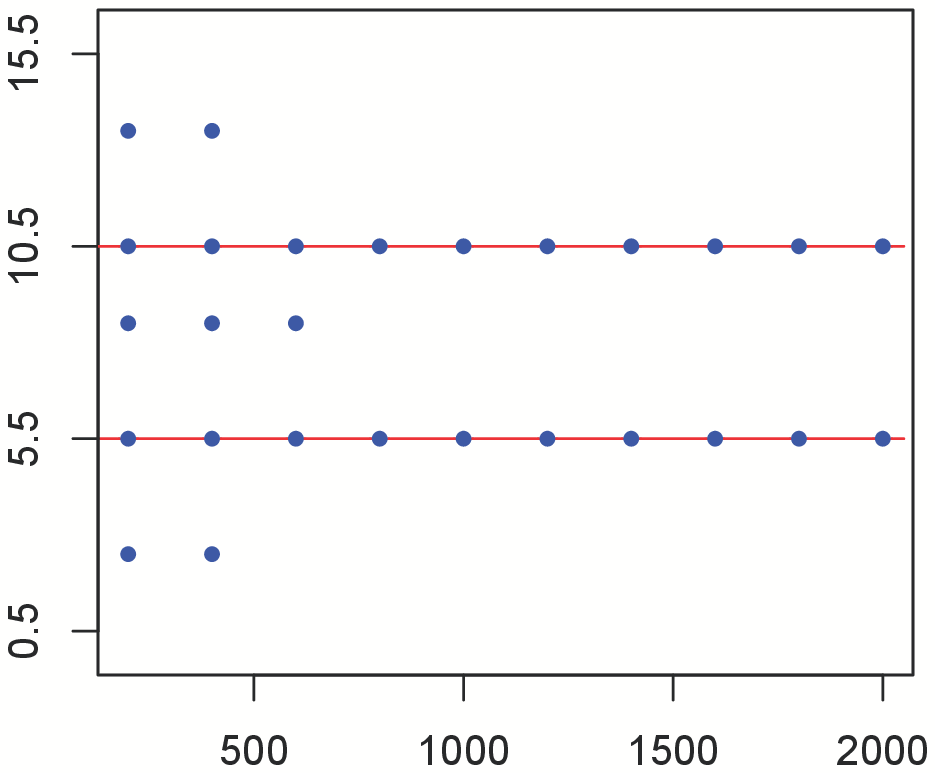}}
\rput[B]{0}(3.2,5.2){\scalebox{0.8}{$\p(\Xs_I) = |A|^{|I|}-1$}}
\rput[B]{0}(3.45,0.3){\scalebox{0.8}{$n$}}
\rput[B]{90}(0.4,2.95){\scalebox{0.7}{$\hat U(\Xs)$}}
\end{pspicture}
\begin{pspicture}(6,6)(0,0)
\rput{0}(3,3){\includegraphics[scale=0.5,trim = 20 20 10 0]{./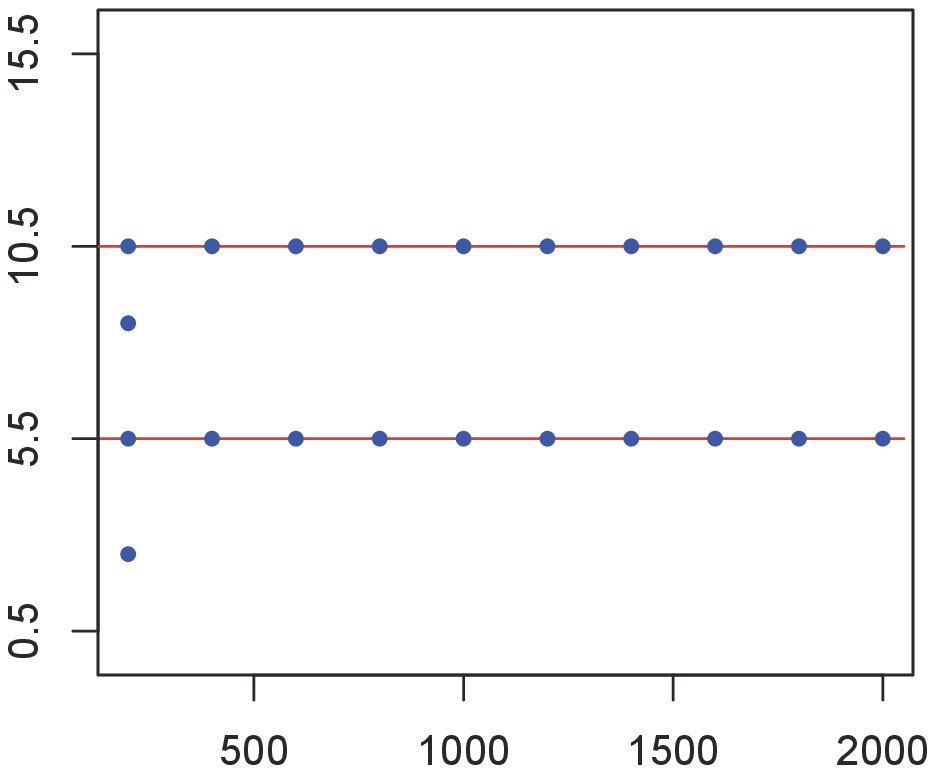}}
\rput[B]{0}(3.2,5.2){\scalebox{0.8}{$\p(I) = 0.5(|A|^{|I|}-1)$}}
\rput[B]{0}(3.45,0.3){\scalebox{0.8}{$n$}}
\rput[B]{90}(0.4,2.95){\scalebox{0.7}{$\hat U(\Xs)$}}
\end{pspicture}
\begin{pspicture}(6,6)(0,0)
\rput{0}(3,3){\includegraphics[scale=0.5,trim = 20 20 10 0]{./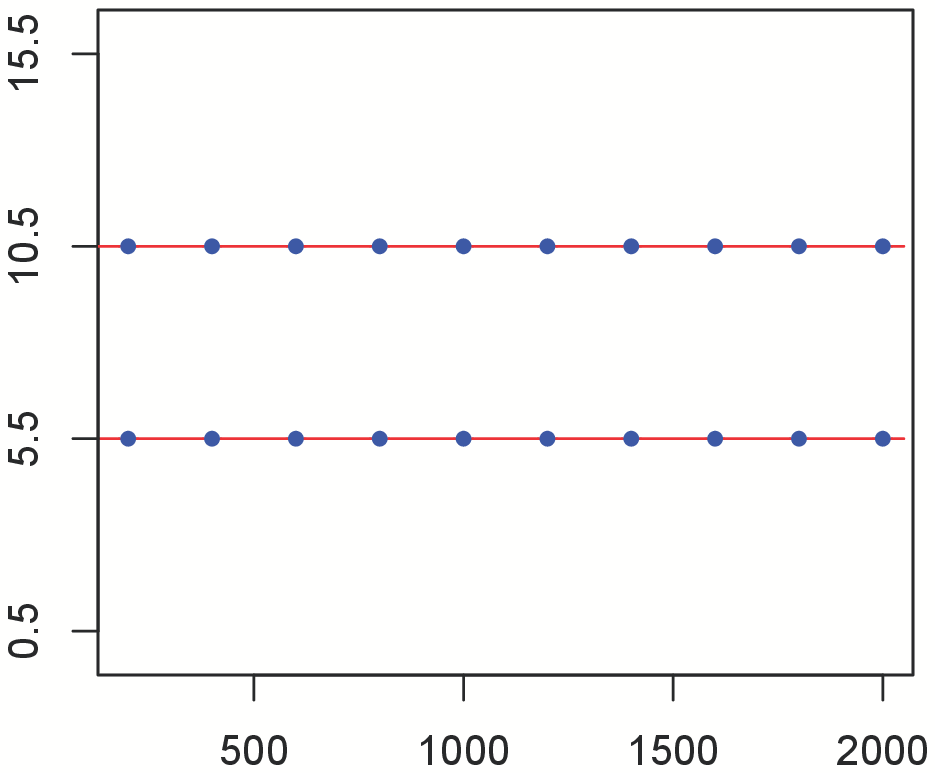}}
\rput[B]{0}(3.2,5.2){\scalebox{0.8}{$\p(I) = 0.1(|A|^{|I|}-1)$}}
\rput[B]{0}(3.45,0.3){\scalebox{0.8}{$n$}}
\rput[B]{90}(0.4,2.95){\scalebox{0.7}{$\hat U(\Xs)$}}
\end{pspicture}
\begin{pspicture}(6,6)(0,0)
\rput{0}(3,3){\includegraphics[scale=0.5,trim = 20 20 10 0]{./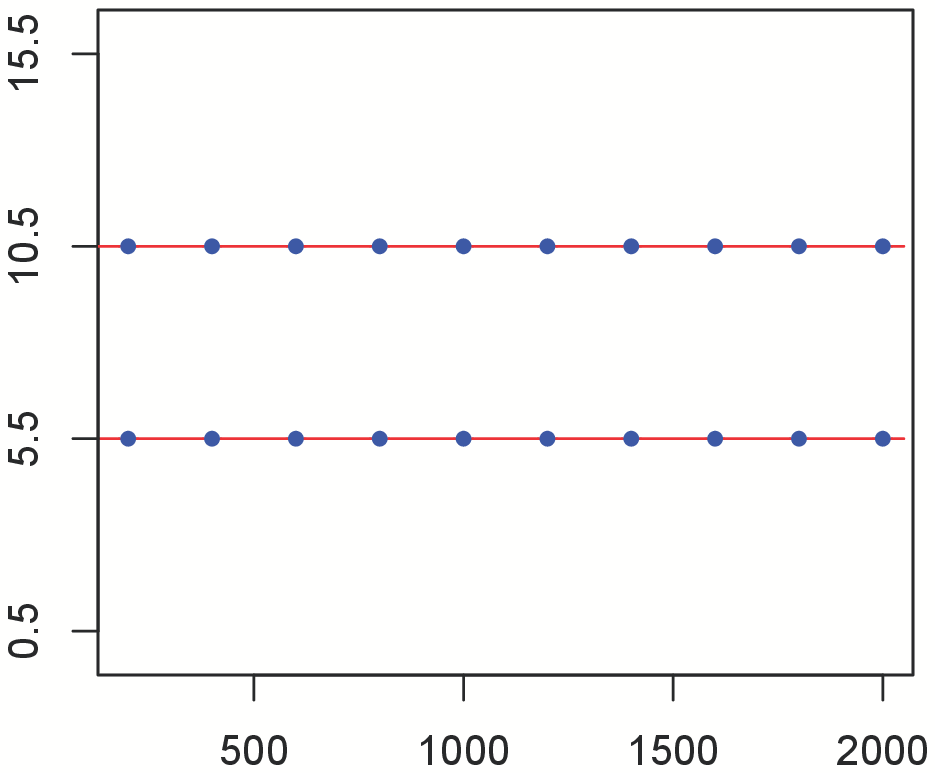}}
\rput[B]{0}(3.2,5.2){\scalebox{0.8}{$\p(I) = 0.05(|A|^{|I|}-1)$}}
\rput[B]{0}(3.45,0.3){\scalebox{0.8}{$n$}}
\rput[B]{90}(0.4,2.95){\scalebox{0.7}{$\hat U(\Xs)$}}
\end{pspicture}
\end{center}
\caption{Dynamic programming estimator of the set of points of independence $U^*=\{5.5,10.5\}$, with $|A|=2$ and for different penalizing functions.}
\label{fig1din}
\end{figure}

We perform some simulations and test both the dynamic programming (exact estimator) as well as the hierarchical algorithm. First we considered the alphabet $A=\{1,2\}$ and the transition matrices $P_1$ and $P_2$ given by
\[
P_1 = \begin{pmatrix}
\frac16 & \frac 56\\
\frac56 & \frac16
\end{pmatrix}
\,,\qquad P_2 = \begin{pmatrix}
\frac56 & \frac 16\\
\frac16 & \frac56
\end{pmatrix}\,.
\]

We simulated $n$ realizations of the vector $\vec{X} = (X_1,X_2,\dotsc,X_{15})$
such that $X_1,\dotsc,X_5$ follow a markov chain with transition matrix $P_1$, $X_6,\dotsc, X_{10}$ follow a markov chain with transition matrix $P_2$
and $X_{11},\dotsc, X_{15}$ follow again a markov chain with transition matrix $P_1$. The initial condition for any one of these markov chains is the uniform distribution over $A$. The results for different values of $n$ are given in Figure~\ref{fig1din} for the exact estimator and in  Figure~\ref{fig1hier} for the hierarchical algorithm.

\begin{figure}[t]
\begin{center}
\begin{pspicture}(6,6)(0,0)
\rput{0}(3,3){\includegraphics[scale=0.5,trim = 20 20 10 0]{./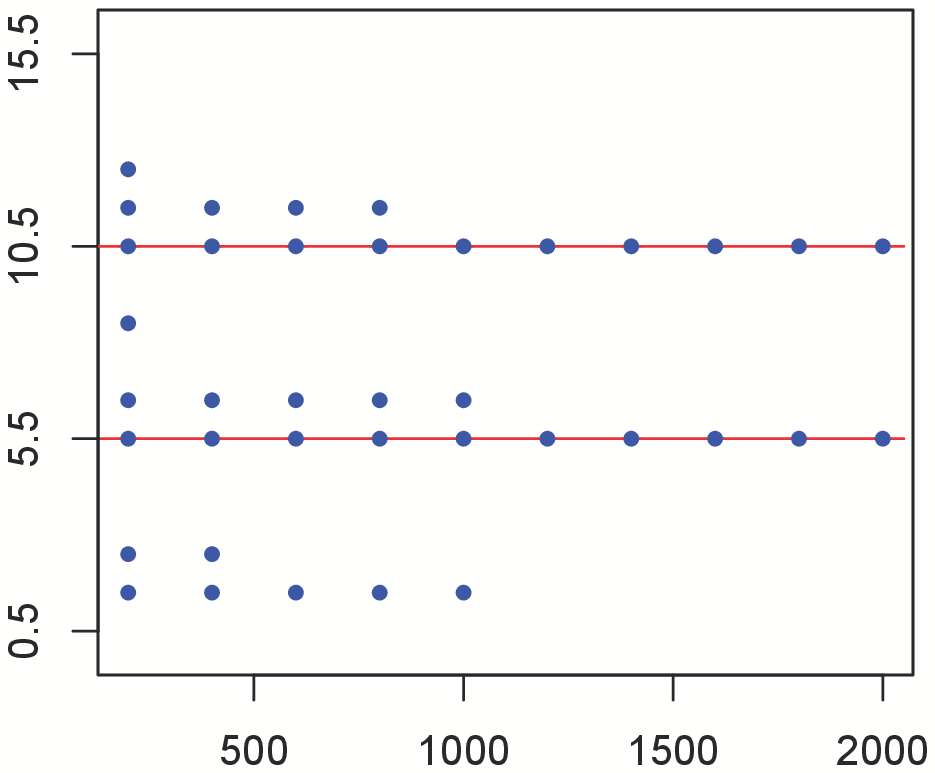}}
\rput[B]{0}(3.2,5.2){\scalebox{0.8}{$\p(I) = |A|^{|I|}-1$}}
\rput[B]{0}(3.45,0.3){\scalebox{0.8}{$n$}}
\rput[B]{90}(0.4,2.95){\scalebox{0.7}{$\hat U(\Xs)$}}
\end{pspicture}
\begin{pspicture}(6,6)(0,0)
\rput{0}(3,3){\includegraphics[scale=0.5,trim = 20 20 10 0]{./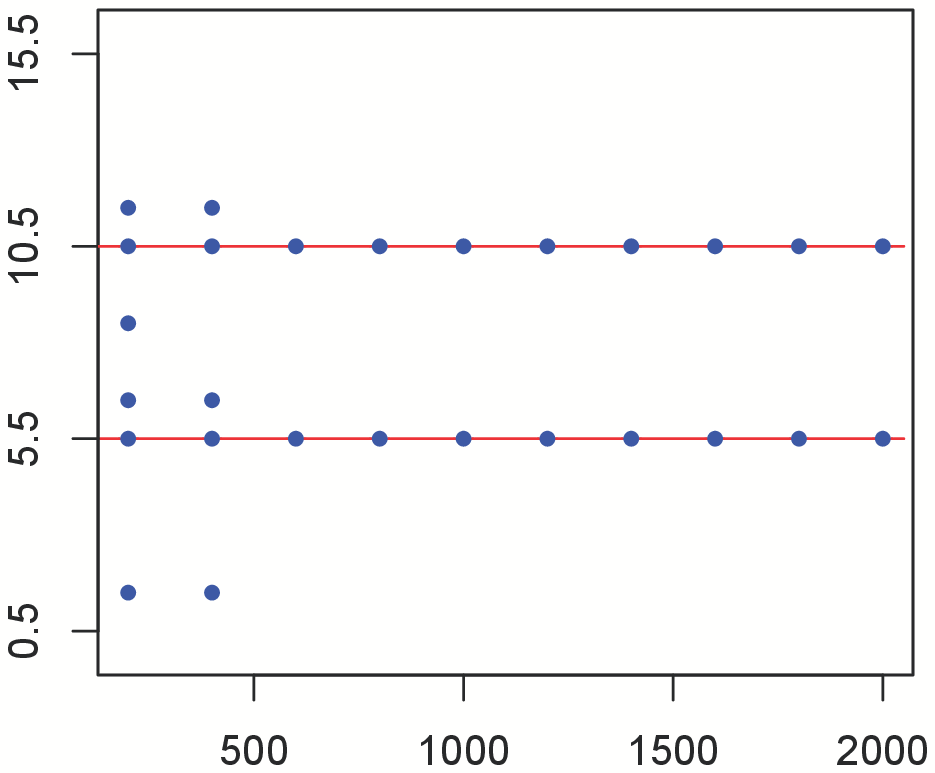}}
\rput[B]{0}(3.2,5.2){\scalebox{0.8}{$\p(I) = 0.5(|A|^{|I|}-1)$}}
\rput[B]{0}(3.45,0.3){\scalebox{0.8}{$n$}}
\rput[B]{90}(0.4,2.95){\scalebox{0.7}{$\hat U(\Xs)$}}
\end{pspicture}
\begin{pspicture}(6,6)(0,0)
\rput{0}(3,3){\includegraphics[scale=0.5,trim = 20 20 10 0]{./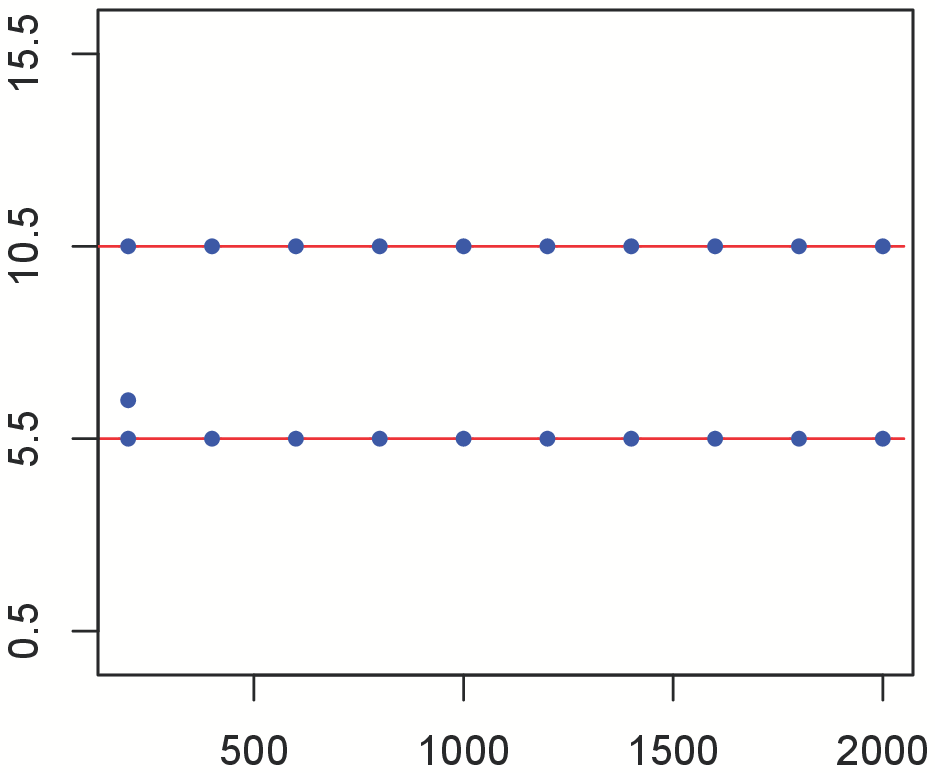}}
\rput[B]{0}(3.2,5.2){\scalebox{0.8}{$\p(I) = 0.1(|A|^{|I|}-1)$}}
\rput[B]{0}(3.45,0.3){\scalebox{0.8}{$n$}}
\rput[B]{90}(0.4,2.95){\scalebox{0.7}{$\hat U(\Xs)$}}
\end{pspicture}
\begin{pspicture}(6,6)(0,0)
\rput{0}(3,3){\includegraphics[scale=0.5,trim = 20 20 10 0]{./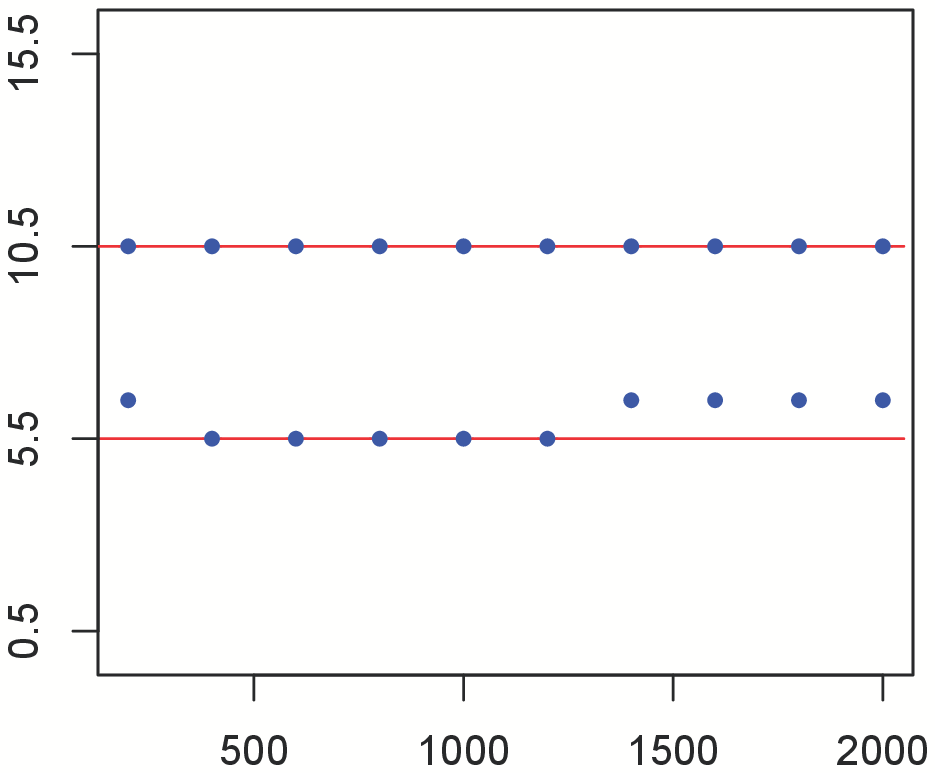}}
\rput[B]{0}(3.2,5.2){\scalebox{0.8}{$\p(I) = 0.05(|A|^{|I|}-1)$}}
\rput[B]{0}(3.45,0.3){\scalebox{0.8}{$n$}}
\rput[B]{90}(0.4,2.95){\scalebox{0.7}{$\hat U(\Xs)$}}
\end{pspicture}
\end{center}
\caption{Hierarchical algorithm estimator of the set of points of independence $U^*=\{5.5,10.5\}$, with $|A|=2$ and for different penalizing functions.}
\label{fig1hier}
\end{figure}

We performed another simulation study  for the alphabet $A=\{1,2,3\}$ and the transition matrices $P'_1$ and $P'_2$ given by
\[
P'_1 = \begin{pmatrix}
\frac13 & \frac13 & \frac13\\
0 & \frac23 & \frac13\\
\frac23 & 0 & \frac13
\end{pmatrix}
\,,\qquad P'_2 = \begin{pmatrix}
\frac12 & \frac12 & 0\\
\frac13 & \frac13 & \frac13\\
\frac16 & \frac56 & 0
\end{pmatrix}\,.
\]

We simulated as before $n$ realizations of the vector $\vec{X} = (X_1,X_2,\dotsc,X_{15})$
such that $X_1,\dotsc, X_5$ follow a markov chain with transition matrix $P'_1$, $X_6,\dotsc, X_{10}$ follow a markov chain with transition matrix $P'_2$
and $X_{11},\dotsc, X_{15}$ follow again a markov chain with transition matrix $P'_1$. The initial condition for any one of these markov chains is the uniform distribution over $A$. The results for different values of $n$ are given in Figure~\ref{fig2din} for the exact estimator and in  Figure~\ref{fig2hier} for the hierarchical algorithm. 

\begin{figure}[t]
\begin{center}
\begin{pspicture}(6,6)(0,0)
\rput{0}(3,3){\includegraphics[scale=0.5,trim = 20 20 10 0]{./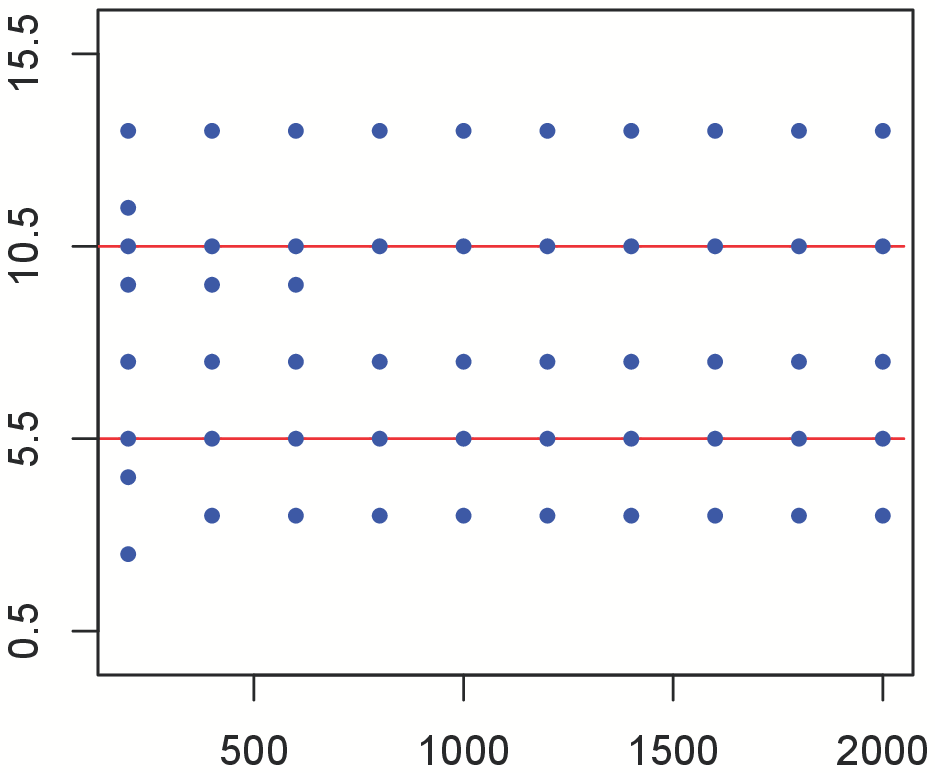}}
\rput[B]{0}(3.2,5.2){\scalebox{0.8}{$\p(I) = |A|^{|I|}-1$}}
\rput[B]{0}(3.45,0.3){\scalebox{0.8}{$n$}}
\rput[B]{90}(0.4,2.95){\scalebox{0.7}{$\hat U(\Xs)$}}
\end{pspicture}
\begin{pspicture}(6,6)(0,0)
\rput{0}(3,3){\includegraphics[scale=0.5,trim = 20 20 10 0]{./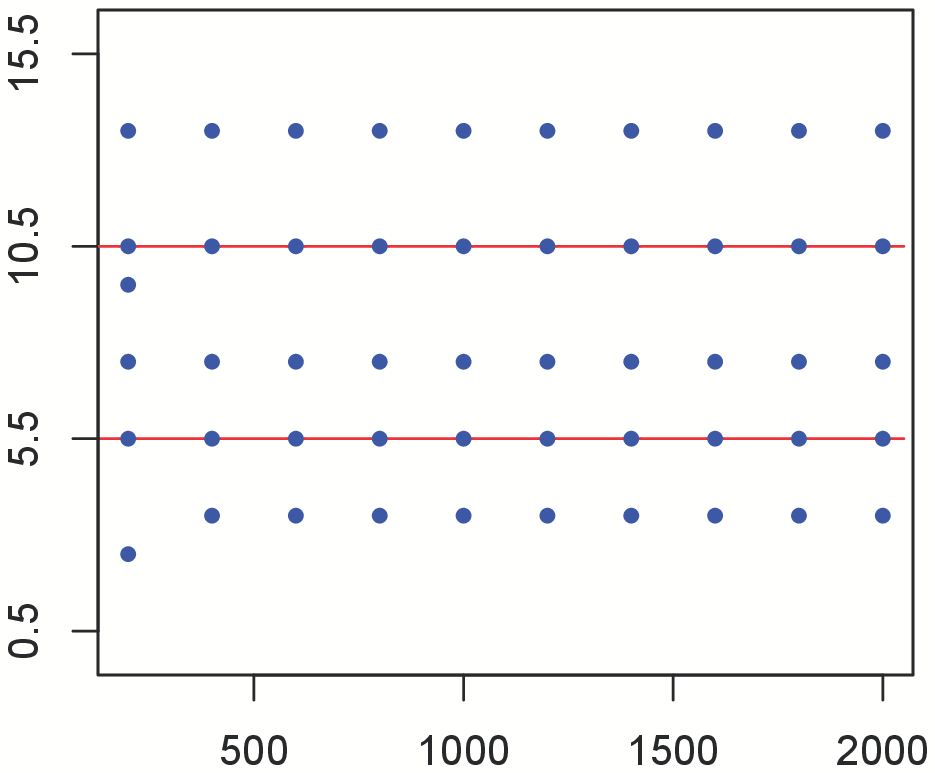}}
\rput[B]{0}(3.2,5.2){\scalebox{0.8}{$\p(I) = 0.5(|A|^{|I|}-1)$}}
\rput[B]{0}(3.45,0.3){\scalebox{0.8}{$n$}}
\rput[B]{90}(0.4,2.95){\scalebox{0.7}{$\hat U(\Xs)$}}
\end{pspicture}
\begin{pspicture}(6,6)(0,0)
\rput{0}(3,3){\includegraphics[scale=0.5,trim = 20 20 10 0]{./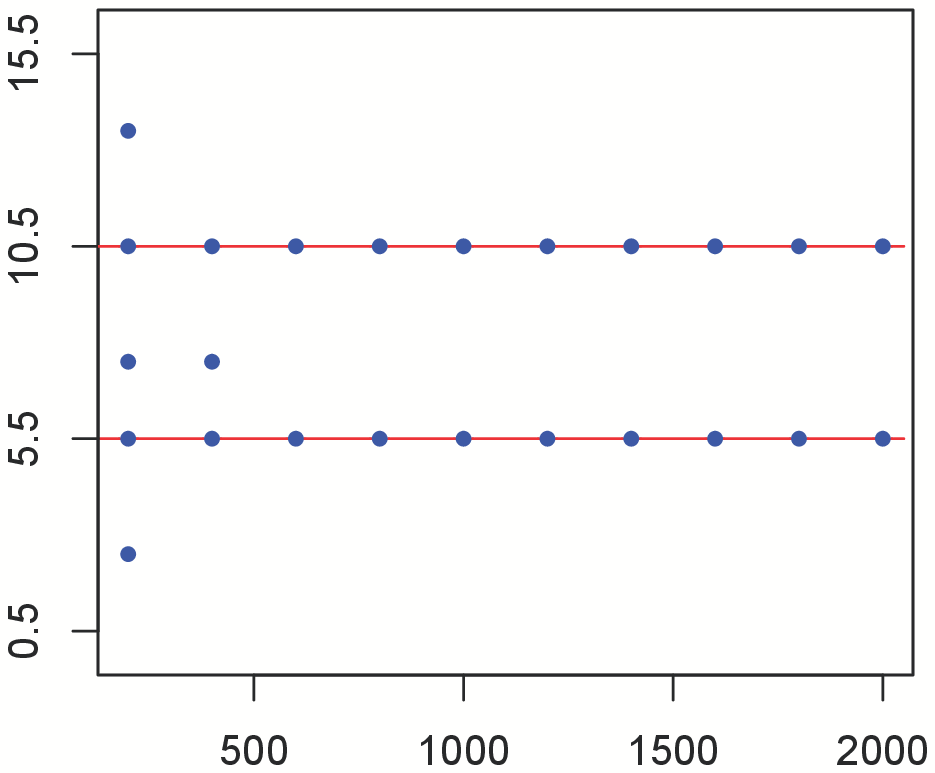}}
\rput[B]{0}(3.2,5.2){\scalebox{0.8}{$\p(I) = 0.1(|A|^{|I|}-1)$}}
\rput[B]{0}(3.45,0.3){\scalebox{0.8}{$n$}}
\rput[B]{90}(0.4,2.95){\scalebox{0.7}{$\hat U(\Xs)$}}
\end{pspicture}
\begin{pspicture}(6,6)(0,0)
\rput{0}(3,3){\includegraphics[scale=0.5,trim = 20 20 10 0]{./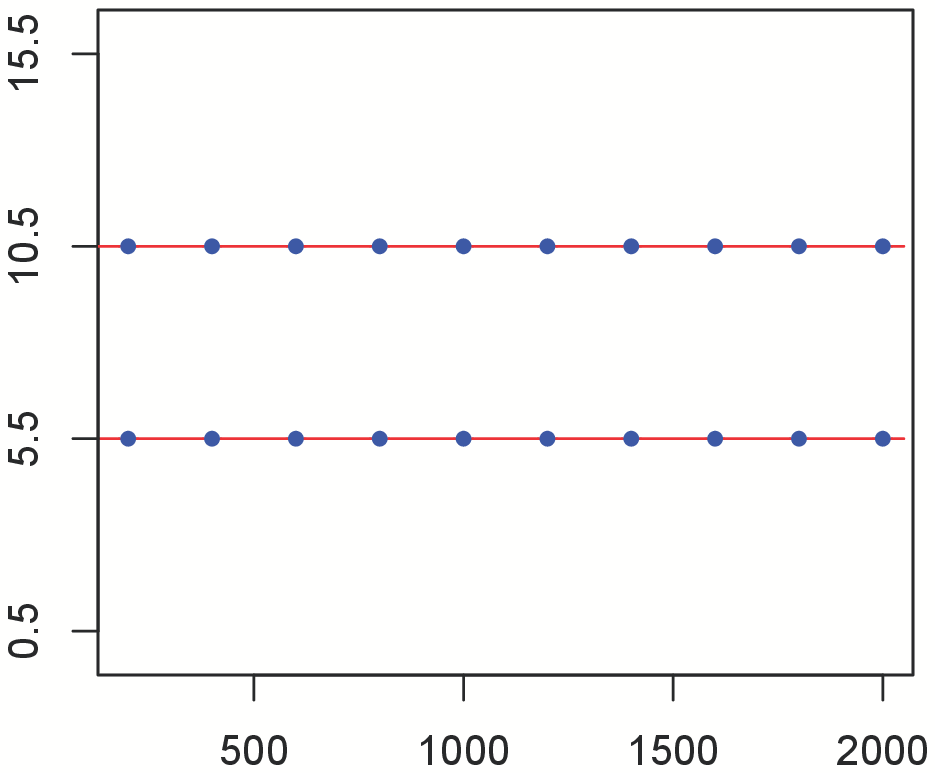}}
\rput[B]{0}(3.2,5.2){\scalebox{0.8}{$\p(I) = 0.05(|A|^{|I|}-1)$}}
\rput[B]{0}(3.45,0.3){\scalebox{0.8}{$n$}}
\rput[B]{90}(0.4,2.95){\scalebox{0.7}{$\hat U(\Xs)$}}
\end{pspicture}
\end{center}
\caption{Dynamic programming estimator of the set of points of independence $U^*=\{5.5,10.5\}$, with $|A|=3$  for different penalizing functions.}
\label{fig2din}
\end{figure}

\begin{figure}[t]
\begin{center}
\begin{pspicture}(6,6)(0,0)
\rput{0}(3,3){\includegraphics[scale=0.5,trim = 20 20 10 0]{./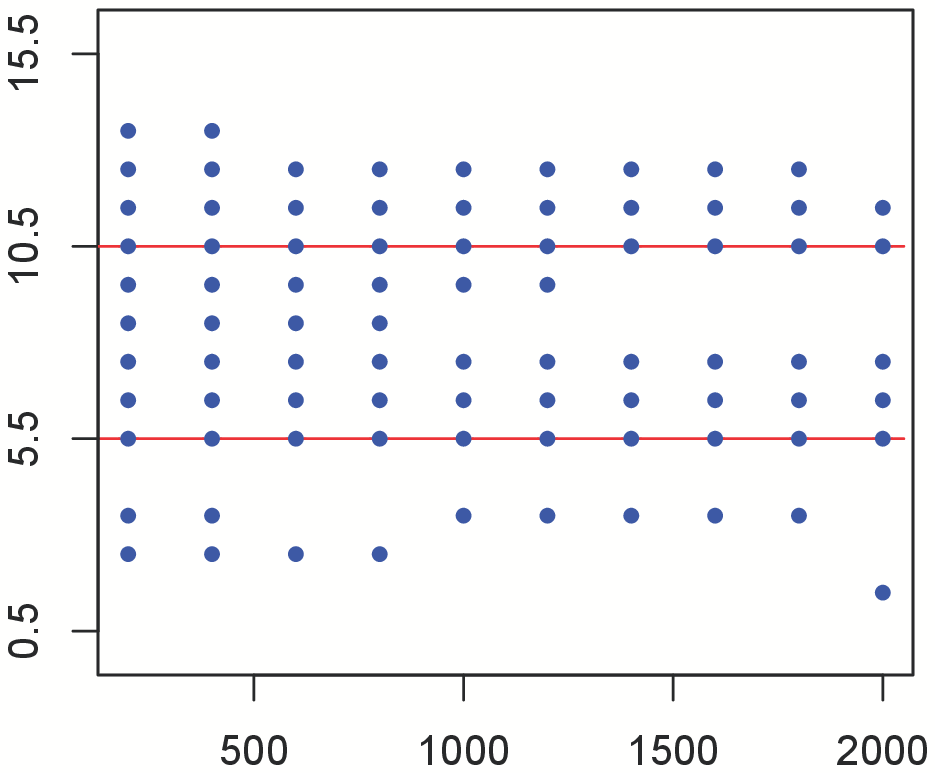}}
\rput[B]{0}(3.2,5.2){\scalebox{0.8}{$\p(I) = |A|^{|I|}-1$}}
\rput[B]{0}(3.45,0.3){\scalebox{0.8}{$n$}}
\rput[B]{90}(0.4,2.95){\scalebox{0.7}{$\hat U(\Xs)$}}
\end{pspicture}
\begin{pspicture}(6,6)(0,0)
\rput{0}(3,3){\includegraphics[scale=0.5,trim = 20 20 10 0]{./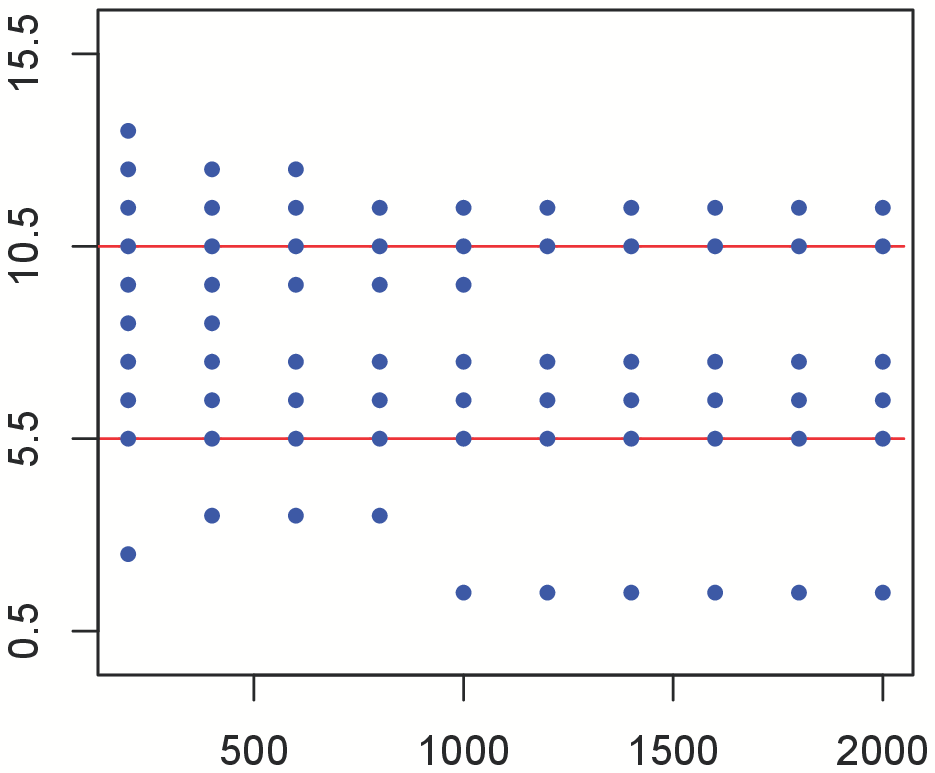}}
\rput[B]{0}(3.2,5.2){\scalebox{0.8}{$\p(I) = 0.5(|A|^{|I|}-1)$}}
\rput[B]{0}(3.45,0.3){\scalebox{0.8}{$n$}}
\rput[B]{90}(0.4,2.95){\scalebox{0.7}{$\hat U(\Xs)$}}
\end{pspicture}
\begin{pspicture}(6,6)(0,0)
\rput{0}(3,3){\includegraphics[scale=0.5,trim = 20 20 10 0]{./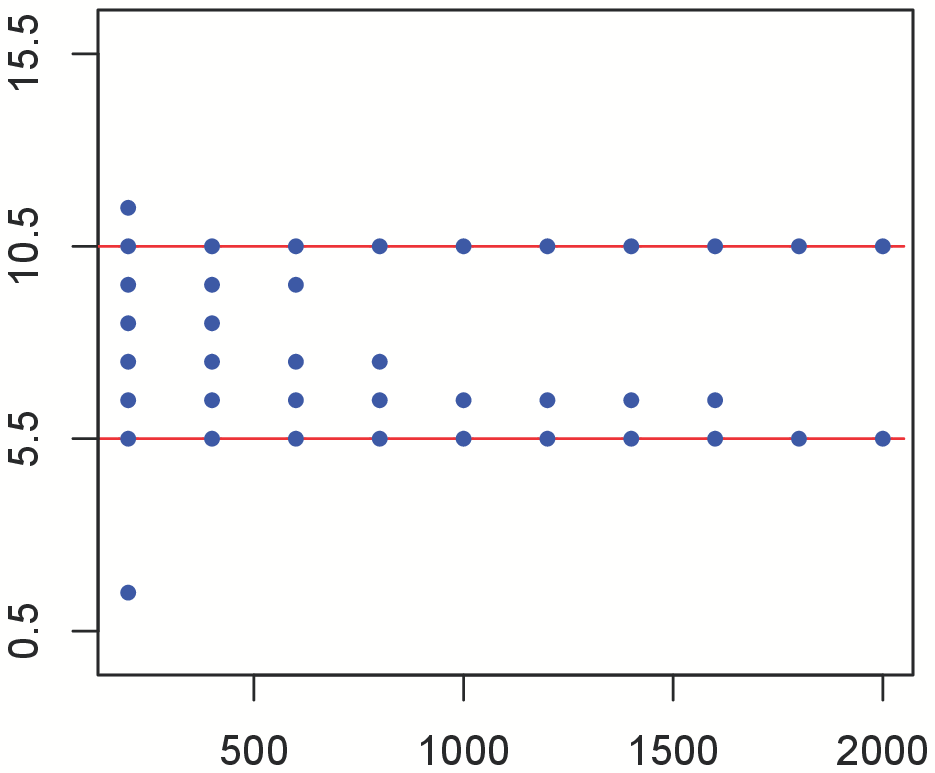}}
\rput[B]{0}(3.2,5.2){\scalebox{0.8}{$\p(I) = 0.1(|A|^{|I|}-1)$}}
\rput[B]{0}(3.45,0.3){\scalebox{0.8}{$n$}}
\rput[B]{90}(0.4,2.95){\scalebox{0.7}{$\hat U(\Xs)$}}
\end{pspicture}
\begin{pspicture}(6,6)(0,0)
\rput{0}(3,3){\includegraphics[scale=0.5,trim = 20 20 10 0]{./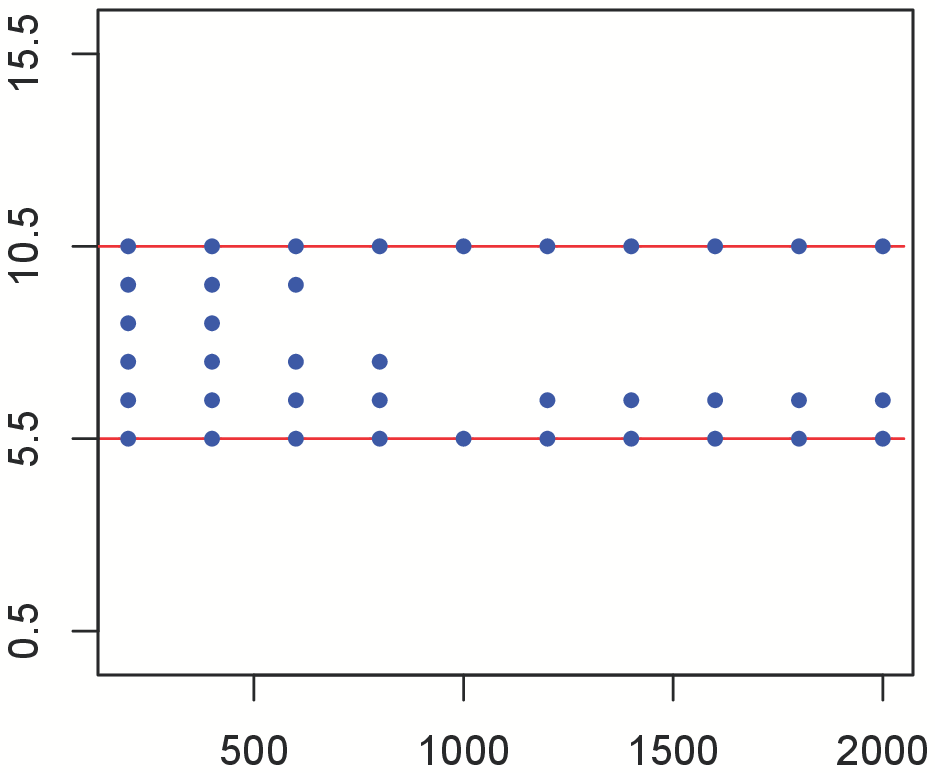}}
\rput[B]{0}(3.2,5.2){\scalebox{0.8}{$\p(I) = 0.05(|A|^{|I|}-1)$}}
\rput[B]{0}(3.45,0.3){\scalebox{0.8}{$n$}}
\rput[B]{90}(0.4,2.95){\scalebox{0.7}{$\hat U(\Xs)$}}
\end{pspicture}
\end{center}
\caption{Hierarchical algorithm estimator of the set of points of independence $U^*=\{5.5,10.5\}$, with $|A|=3$  for different penalizing functions.}
\label{fig2hier}
\end{figure}

The two algorithms were implemented in the open-source package {\tt R} and are available 
upon request.  

\section{Dimensionality reduction in Ebola Virus protein alignments}\label{data}

As an illustration of the practical use of the method proposed in this paper we 
apply the  dynamic programming algorithm described in Section~\ref{secalgs}
to a protein sequence alignment of Ebola Virus taken from the Ebola HFV Database \cite{kuikenetall2012}. 

The alignment corresponds to the matrix protein VP40, with $n=21$ and $m=326$. 
Although the possible amino-acids  in each position are 20, the maximal observed alphabet is
$A_{17}$ with 5 amino-acids. Moreover,  although the number of columns is much bigger than the number of observations, a total of  206 columns in the alignment shows a conserved amino-acid, with no influence in the segmentation result.  

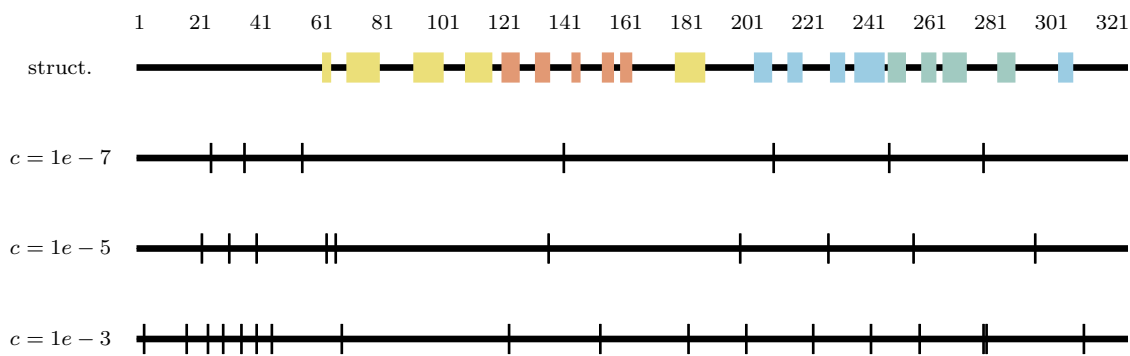
\begin{figure}[t]
\begin{center}
\psset{unit=0.4cm}
\begin{pspicture}(32.6,10)(-5,-3)
\psset{linewidth=2.5pt,arrowsize=12pt,linecolor=black}
\multido{\r=0.1+2,\i=1+20}{17}{\rput{0}(\r,9.5){\scriptsize \i}}
\definecolor{myyellow}{cmyk}{0,0,0.6,0.1}
\definecolor{myorange}{cmyk}{0,0.4,0.5,0.1}
\definecolor{myblue}{cmyk}{0.3,0,0,0.1}
\definecolor{mycyan}{cmyk}{0.3,0,0.2,0.1}
\psline(0,8)(32.6,8)\rput{0}(-2.5,8){\scriptsize struct.}
\psset{fillstyle=solid,fillcolor=myyellow,linecolor=myyellow}
\psframe(6.1,7.5)(6.4,8.5)
\psframe(6.9,7.5)(8,8.5)
\psframe(9.1,7.5)(10.1,8.5)
\psframe(10.8,7.5)(11.7,8.5)
\psframe(17.7,7.5)(18.7,8.5)
\psset{fillstyle=solid,fillcolor=myorange,linecolor=myorange}
\psframe(12,7.5)(12.6,8.5)
\psframe(13.1,7.5)(13.6,8.5)
\psframe(14.3,7.5)(14.6,8.5)
\psframe(15.3,7.5)(15.7,8.5)
\psframe(15.9,7.5)(16.3,8.5)
\psset{fillstyle=solid,fillcolor=myblue,linecolor=myblue}
\psframe(20.3,7.5)(20.9,8.5)
\psframe(21.4,7.5)(21.9,8.5)
\psframe(22.8,7.5)(23.3,8.5)
\psframe(23.6,7.5)(24.6,8.5)
\psframe(30.3,7.5)(30.8,8.5)
\psset{fillstyle=solid,fillcolor=mycyan,linecolor=mycyan}
\psframe(24.7,7.5)(25.3,8.5)
\psframe(25.8,7.5)(26.3,8.5)
\psframe(26.5,7.5)(27.3,8.5)
\psframe(28.3,7.5)(28.9,8.5)
\psset{linewidth=1pt,fillstyle=solid,fillcolor=black,linecolor=black}
\psline[linewidth=2.5pt](0,5)(32.6,5)\rput{0}(-2.5,5){\scriptsize$c=1e-7$}
\psline(2.45,4.5)(2.45,5.5)
\psline(3.55,4.5)(3.55,5.5)
\psline(5.45,4.5)(5.45,5.5)
\psline(14.05,4.5)(14.05,5.5)
\psline(20.95,4.5)(20.95,5.5)
\psline(24.75,4.5)(24.75,5.5)
\psline(27.85,4.5)(27.85,5.5)
\rput{0}(0,-3){
\psline[linewidth=2.5pt](0,5)(32.6,5)\rput{0}(-2.5,5){\scriptsize$c=1e-5$}
\psline(2.15,4.5)(2.15,5.5)
\psline(3.05,4.5)(3.05,5.5)
\psline(3.95,4.5)(3.95,5.5)
\psline(6.25,4.5)(6.25,5.5)
\psline(6.55,4.5)(6.55,5.5)
\psline(13.55,4.5)(13.55,5.5)
\psline(19.85,4.5)(19.85,5.5)
\psline(22.75,4.5)(22.75,5.5)
\psline(25.55,4.5)(25.55,5.5)
\psline(29.55,4.5)(29.55,5.5)}
\rput{0}(0,-6){
\psline[linewidth=2.5pt](0,5)(32.6,5)\rput{0}(-2.5,5){\scriptsize$c=1e-3$}
\psline(0.25,4.5)(0.25,5.5)
\psline(1.65,4.5)(1.65,5.5)
\psline(2.35,4.5)(2.35,5.5)
\psline(2.85,4.5)(2.85,5.5)
\psline(3.45,4.5)(3.45,5.5)
\psline(3.95,4.5)(3.95,5.5)
\psline(4.45,4.5)(4.45,5.5)
\psline(6.75,4.5)(6.75,5.5)
\psline(12.25,4.5)(12.25,5.5)
\psline(15.25,4.5)(15.25,5.5)
\psline(18.15,4.5)(18.15,5.5)
\psline(20.05,4.5)(20.05,5.5)
\psline(22.25,4.5)(22.25,5.5)
\psline(24.15,4.5)(24.15,5.5)
\psline(25.75,4.5)(25.75,5.5)
\psline(27.85,4.5)(27.85,5.5)
\psline(27.95,4.5)(27.95,5.5)
\psline(31.15,4.5)(31.15,5.5)}
\end{pspicture}
\caption{Sequence segmentation of Ebola virus protein VP40. The top represents the secondary structure 
of the protein described in  \cite{dessen2000}, while the other lines represent the segments estimated by the dynamic programming algorithm for different values of $c$.}\label{figebola}
\end{center}
\end{figure}

As the number of observed amino-acids differs considerably in each position, we use the empirical penalizing function given by $\hat\p(\Xs_I) = c(\max(2,\prod_{i\in I}\|\Xs_{i}\|_0)-1)$, where $\|\Xs_{i}\|_0 = 
\sum_{a_i\in A_i} \1\{N(a_i)>0\}$. 
In this case we obtained different segmentation patterns corresponding to different  penalizing functions, that can be compared to the secondary structure patterns in  \cite{dessen2000} (see Figure~\ref{figebola}).

\section*{Acknowledgments}
F.L.\/ is partially supported by a CNPq-Brazil fellowship (304836/2012-5) and FAPESP's fellowship 
(2014/00947-0). This article was produced as part of the activities of FAPESP  Research, Innovation and Dissemination Center for Neuromathematics, grant 2013/07699-0, S\~ao Paulo Research Foundation.

\bibliography{references}

\end{document}